\begin{document}

\newtheorem{hypotheses}{Hypotheses}
\newtheorem{problem}{Problem}
\newtheorem{example}{Example}
\newtheorem{definition}{Definition}
\newtheorem{assumption}{Assumption}
\newtheorem{theorem}{Theorem}
\newtheorem{lemma}{Lemma}
\newtheorem{corollary}{Corollary}[theorem]
\newtheorem{proposition}{Proposition}
\newtheorem*{remark}{Remark}
\newtheorem{conjecture}{Conjecture}
\numberwithin{assumption}{section}

\newcommand{\edit}[1]{\textcolor{black}{#1}}
\newcommand{\arh}[1]{{\color{blue}{[ARH: #1]}}}

\newcommand{\will}[1]{{\color{olive}{[Will: #1]}}}
\newcommand{\ignore}[1]{}
\newcommand{\diff}{\mathop{}\!\mathrm{d}}

\algnewcommand{\Inputs}[1]{%
  \State \textbf{Inputs:}
  \Statex \hspace*{\algorithmicindent}\parbox[t]{.8\linewidth}{\raggedright #1}
}
\algnewcommand{\Initialize}[1]{%
  \State \textbf{Initialize:}
  \Statex \hspace*{\algorithmicindent}\parbox[t]{.8\linewidth}{\raggedright #1}
}

\title{
Online Identification of Time-Varying Systems Using Excitation Sets and Change Point Detection 
    \thanks{*Chi Ho Leung and Philip E. Par\'e are with the Elmore Family School of Electrical and Computer Engineering, Purdue University, USA. Ashish R. Hota is with the Department of Electrical Engineering, IIT Kharagpur, India.  E-mail: leung61@purdue.edu, ahota@ee.iitkgp.ac.in, philpare@purdue.edu. This material is based upon work supported in part by the US-India Collaborative Research Program between the US National Science Foundation (NSF-ECCS \#2032258, \#2238388) and the Department of Science and Technology of India (via IDEAS TIH, ISI Kolkata).}
}

\author{Chi Ho Leung, Ashish R. Hota, and Philip E. Par\'{e}*}

\maketitle

\begin{abstract}
In this work, we first show that the problem of parameter identification is often ill-conditioned and lacks the persistence of excitation required for the convergence of online learning schemes. To tackle these challenges, we introduce the notion of optimal and greedy excitation sets which contain data points with sufficient richness to aid in the identification task. We then present the greedy excitation set-based recursive least squares algorithm to alleviate the problem of the lack of persistent excitation, and prove that the iterates generated by the proposed algorithm minimize an auxiliary weighted least squares cost function. When data points are generated from time-varying parameters, online estimators tend to underfit the true parameter trajectory, and their predictability deteriorates. To tackle this problem, we propose a memory resetting scheme leveraging change point detection techniques. Finally, we illustrate the performance of the proposed algorithms via several numerical case studies to learn the (time-varying) parameters of networked epidemic dynamics, and compare it with results obtained using conventional approaches.
\end{abstract}

\section{Introduction}

By continuously updating predictions with incoming data, practitioners can better manage uncertainty, optimize performance, and mitigate risks~\cite{hota2021closed}.
\edit{However, traditional adaptive identification methods, such as gradient descent and recursive least squares, often struggle due to inadequate excitation, poor identifiability, and difficulties tracking time-varying system parameters~\cite{islam2019recursive}. 
Persistent excitation, defined as the consistent information richness of input signals, is generally considered sufficient for adaptive estimation convergence, but is often excessively restrictive in practice~\cite{ljung1994global,ioannou2006adaptive,narendra1987persistent}. 
Similarly, while structural identifiability ensures that parameters can be theoretically estimated from noise-free data~\cite{hamelin2020observability,ljung1994global}, practical identifiability, the ability to reliably estimate parameters from noisy real-world data is crucial, yet challenging~\cite{wieland2021structural}.}

Contemporary approaches for overcoming the challenges of the lack of persistent excitation and practical identifiability include indirect adaptive control strategies~\cite{ioannou1988theory} and robust adaptive control methods~\cite{ioannou1996robust}. 
Efforts to mitigate the lack of uniform persistent excitation have also \edit{been} explored, relaxing this stringent requirement and introducing concepts like initial excitation~\cite{jha2019initial, dhar2022initial}. 
Additionally, techniques that incorporate second-order information into each parameter estimation step, such as recursive least squares (RLS) methods~\cite{cao2000directional, lai2024generalized, salgado1988modified}, have been developed. 
In this work, we propose an algorithm that combines the well-known RLS algorithm with the novel concept of an excitation set that is used to construct the main regressor. 
We explain the intuition behind this idea, and show that the estimates obtained by the proposed algorithm minimize an auxiliary cost function that weighs exciting data points with unity weight and disregards regressors that are not sufficiently exciting. 

\edit{Further}, we often encounter systems with time-varying parameters in practice. Previous research has addressed offline system identification problems with piece-wise constant time-varying parameters~\cite{fearnhead2006exact}. 
Online parameter estimation, however, remains challenging. 
Some solutions have been proposed through detecting non-stationary statistical properties in the prediction error of recursive least squares~\cite{mohseni2022recursive}. 
In this work, we combine the exponentially weighted average model~\cite{hunter1986exponentially} and the likelihood ratio test~\cite{hogg1977probability} to detect abrupt changes in system parameters through \edit{the} prediction error. 
Furthermore, we propose \edit{to combine} this change point detection and memory resetting algorithm \edit{with our learning algorithm as well as} a broad class of online estimation methods.

Lastly, recognizing the well-documented challenge of practical identifiability in epidemic models as highlighted in recent studies \cite{hamelin2020observability, prasse2022predicting}, we compare our results with existing approaches. 
We demonstrate the effectiveness of our algorithm for online parameter identification in time-varying epidemic models~\cite{kermack1927sir, giordano2020modelling} with noise. 
Specifically, we consider the susceptible-infected-susceptible~(SIS) and networked susceptible-infected-recovered~(SIR) models with time-varying parameters to illustrate the performance of our proposed algorithms.

This paper builds upon our previous work in~\cite{leung2023adaptive}, enhancing it with additional analysis, algorithms, and simulations to address the online estimation challenges present in a broad class of time-varying nonlinear systems beyond single-node SIS systems.
We organize the content as follows: in Section~\ref{chp:prelim}, we present the problem formulation and some necessary definitions for the rest of the paper.
% , including persistent excitation, Fisher Information matrix, and practical identifiability. 
In Section~\ref{sec:lack-pe-pi}, we give a sufficient condition for the lack of persistent excitation.
In Section~\ref{sec:exc-set}, we present the definition of an excitation set and greedy excitation set, which are used to develop the greedily-weighted recursive least squares algorithm.
In Section~\ref{sec:exc-set-rls}, we present the greedily-weighted recursive least squares algorithm as a way to overcome the lack of persistent excitation and practical identifiability assuming parameters are time-invariant.
%in the exponentially forgetting recursive least squaresalgorithm.
In Section~\ref{sec:fault-dect}, we present a way to facilitate tracking of time-varying, piecewise-constant parameters through memory resetting upon detection of abrupt changes. 
In Section~\ref{sec:sim}, we demonstrate the performance of the proposed algorithms on \edit{SIS and networked SIR} models.

\subsection{Notations}
We denote a matrix $X$ to be positive and negative semi-definite by $X \succeq 0$ and $X \preceq 0$, respectively. 
The condition number of a matrix $X$, defined as the ratio of the largest singular value of \(X\) to the smallest singular value of \(X\), is denoted by \(\kappa(X)\). 
For some positive semi-definite matrix $A$, the weighted norm of a vector $x$ is defined as $\|x\|_{A} = \sqrt{x^\top Ax}$, and the \edit{2-norm is denoted by $\|\cdot\|$}. 
The spectral norm of a matrix $A \in \mathbb{R}^{m\times n}$ is denoted as $\|A\|_2$, which is also the maximum singular value of $A$, while $\|v\|_\infty\coloneqq \max(v)$ denote the infinity norm.
The maximum/minimum eigenvalue/singular values of a symmetric matrix $A$ are denoted as $\lambda_{\max}(A)$, $\lambda_{\min}(A)$, $\sigma_{\max}(A)$, and $\sigma_{\min}(A)$, respectively.
\edit{The spectral norm of a matrix $A \in \mathbb{R}^{m\times n}$ is denoted as $\|A\|_2$, which is also the maximum singular value of $A$.}
% A square matrix of shape $n\times n$ with all ones is denoted as $E_n$. 
% A vector of size $n$ with all ones is denoted as $e_n$. 
A zero vector with a one in the $i^{th}$ entry is denoted as $1_i$. 
Kronecker product is denoted as $\otimes$. 
A sequence of variables $\{x_i\}_{1:n}$ is concatenated into a vector through the operator $\text{vec}_{1:n}(x_i)$.

\section{Preliminary}\label{chp:prelim}
The concepts of model structure, regressor, information matrix, persistent excitation, and structural and practical identifiability are widely used across various research fields, including statistics, automatic control, systems biology, and environmental science \cite{johnson2002applied, ljung1994global, bellman1970structural, hamby1994review}. 
Although these analytical constructs share similar core concepts, the terminology and definitions may vary according to different conventions. 
In this section, we state the problem of interest and define these essential terms to ensure clarity and consistency. 

\subsection{Problem Formulation}

\edit{
Consider a continuous-time dynamical system:
% \begin{equation}\label{dyn_constraint}
%     x_{k+1} = x_k + h\phi(x_k, u_k)\theta.
% \end{equation}
\begin{equation}\label{dyn_constraint}
    \dot x(t) = \phi(x(t), u(t))\theta,
\end{equation}
where both input $u(t)$ and state $x(t)$ are measured at a finite set of sample time steps $\{t_0, \dots, t_K\}$ with equal distance $h = t_k - t_{k-1}$.
We define \(\phi_k \coloneqq \phi(x(t_k), u(t_k))\) and \(\psi_k \coloneqq (x(t_{k+1}) - x(t_k)) / h\).
Therefore, a stream of data points from \eqref{dyn_constraint} can be denoted as \(\{(\phi_k, \psi_k)\}_{0:K-1}\).}
Consider a sequence of pairs \(\{(\phi_k, \psi_k)\}_{0:K}\) that exhibits the noise characteristics:
\begin{equation}\label{eqn:gen_nonlinear_sys_pre}
    \psi_k = \phi_k\theta + \xi_k,
\end{equation}
where \(\phi_{k}\in \mathbb{R}^{n\times p}\) collects the feature vectors, \(\psi_{k}\in  \mathbb{R}^n\) is the target vectors, and \(\xi_k \in \mathbb{R}^n\) is \edit{a random} noise.
The residual of an estimator $\hat \theta$ is defined as:
\begin{equation}\label{eqn:residual}
    r_{k}(\hat{\theta}) = \psi_k - \phi_k\hat{\theta}.
\end{equation}
We want to design an adaptive identification law $f$ such that
$$\hat{\theta}_k \coloneqq \hat{\theta}_{k-1} + f(\phi_k, \psi_k),~\hat{\theta}_0 \coloneqq \theta_0,$$
{Interpreting $\theta_0$ in a Bayesian manner makes the role of this initialization explicit: assume a Gaussian prior distribution $\mathcal{N}(\theta_0, P_0),$ with covariance $P_0\succ 0$.}
 {With the weighted empirical cost:
 \begin{equation}\label{eqn:emp_cost_tseries}
 C_{K}^{(emp)}{(\hat \theta)}  \coloneqq  \sum_{i=1}^{K} w_{i,K}\,\|r_i(\hat \theta)\|^{2} +  \alpha^{K+1}\,\|\hat \theta-\theta_0\|_{P_0^{-1}}^{2},
\end{equation}
 we minimize $C_{K}^{(emp)}{(\hat \theta)}$ over $\hat \theta$ using all data up to time $K$, where $w_{i,K}$ is given in \eqref{eq:greedy_weights} and $\theta_0$ is the non-zero prior mean.
 We are interested in the online estimation of $\hat \theta$ under the lack of persistent excitation and poorly-conditioned practical identifiability (formally defined in Section~\ref{subsec:backgroud}).}
% and when \eqref{dyn_constraint} is time-varying.
\edit{In Section~\ref{sec:GWRLS}, we show that choosing an appropriate form of the cost function can help mitigate the effect of the lack of persistent excitation and practical identifiability.
}
\edit{In Section~\ref{sec:fault-dect}, we develop an algorithm for}
% \edit{
%     (FORSHADOW THE IMPORTANCE OF THE CHOICE OF COST FUNCTION. 
%     In the later section, we will show that choosing different form of the cost function will perform differently under the lack of PE and PI...)
% }
$\hat{\theta}_k$ to track the time-varying $\theta(t)$ assuming that $\theta(t)$ is piece-wise constant in time.

\subsection{Background}\label{subsec:backgroud}
The notion of \textit{regressor signal} and \textit{regressor matrix} are defined as follows.

\begin{definition}[Regressor Signal]
    A regressor signal from time $0$ to $K$ is the sequence $\{\phi_k\}_{0:K}$.
    A regressor matrix $\Phi_{0:K}\in \mathbb{R}^{n(K+1)\times p}$ with respect to the regressor signal $\{\phi_k\}_{0:K}$ is:
    \begin{equation*}
        \Phi_{0:K} \coloneqq \begin{bmatrix}
            \phi_{0}^\top & \dots & \phi_K^\top
            % \vdots\\
            % \phi_T^\top
        \end{bmatrix}^\top.
    \end{equation*}
\end{definition}
The regressor signal constitutes the dataset of a learning problem, and arranging data points into rows within this signal gives rise to the regressor matrix, which goes by various names like \edit{the} design matrix\edit{,} or model matrix\edit{,} in different contexts.
Assessing the richness of the data naturally involves examining the rank of this regressor matrix. 
Yet, since the size of the regressor matrix varies depending on the quantity of available data points, it proves more convenient to analyze the \textit{information matrix} instead.
\begin{definition}[Information Matrix]
    The \textit{information matrix} with respect to $\{\phi_k\}_{0:K}$ is \edit{$H \coloneqq [\Phi^\top\Phi]_{0:K} \coloneqq \Phi_{0:K}^\top\Phi_{0:K} \in \mathbb{R}^{p\times p}$.}%, which is positive semi-definite}.
\end{definition}
\noindent 
\edit{Note, by construction, $H$ is positive semi-definite.}
The information matrix \edit{$H$} can be understood through the lens of optimization theory. 
When the model parameters are linearly separated from the observed data, 
% and the parameter estimation is conducted through a least squares optimization approach, 
\edit{$H$} takes on the form of the Hessian matrix pertaining to the quadratic cost \edit{of} the model parameters. 
Essentially, it represents the curvature of the least squares cost function~\eqref{eqn:emp_cost_tseries}.
Also\edit{,} note that we can rewrite the information matrix as a series, $\Phi_{0:K}^\top\Phi_{0:K} = \sum_{k=0}^K \phi_k^\top\phi_k$, and the notion of persistent excitation can be understood as the persistent richness of future incoming data points, which is the rank of $\lim_{K\to \infty}\sum_{k=l}^K \phi_k^\top\phi_k$ for each $l > 0$.
\begin{definition}[Persistent Excitation~(PE)]
    A regressor signal $\lim_{K\to \infty}\{\phi_k\}_{0:K}$ is said to be persistently exciting if:
    \begin{equation}\label{eq:pe_definition}
        \sum_{k=l}^{l+L} \phi_k^\top \phi_k \succeq \edit{a} I \quad \forall l \in \mathbb{Z}_{\geq 0},
    \end{equation}
    for some positive constants $L, \edit{a > 0}$.
\end{definition}
Different definitions of persistent excitation have been proposed in the literature, as evidenced by various sources \cite{dhar2022initial, jha2019initial, panteley2001relaxed}, depending on whether the bounding condition applies universally to all initial conditions $x_0$ and inputs. 
Despite this variety, the essence of persistent excitation revolves around constraining the partial sum $\sum_{k=l}^{l+L} \phi_k^\top \phi_k$, which we will term the \textit{partial information matrix}. 

\edit{While \edit{the} lower bound in \eqref{eq:pe_definition} ensures the minimum eigenvalue of the information matrix is positive, 
% having both upper and lower bounds ensures the information matrix is well-conditioned.
a small condition number of the information matrix ensures that the matrix is well-conditioned.
}
\edit{A} well-conditioned information matrix is crucial for generating small confidence intervals in \edit{the} system parameter estimates. 
The collective requirements on model structure and signal richness that lead to these small confidence intervals are termed \textit{practical identifiability}.
\begin{definition}[Practical Identifiability~(PI) \edit{Index}]\label{def:pi_condtion}
    The practical identifiability index of $\{\phi_k\}_{k_1:k_2}$ is the condition number of the partial information matrix, i.e., 
    \begin{equation*}
    {
        \kappa_{k_1:k_2} \coloneqq \kappa([\Phi^\top\Phi]_{k_1:k_2}) \coloneqq \left\|[\Phi^\top\Phi]_{k_1:k_2}\right\|_{{2}}\cdot\left\|[\Phi^\top\Phi]_{k_1:k_2}^{-1}\right\|_{{2}}.}
    \end{equation*}
\end{definition}
The PI index $\kappa_{k_1:k_2} \in [1, \infty)$ indicates that the regressor signal is becoming ill-conditioned as its value approaches infinity.

\subsection{Lack of PE and PI in Dynamical Systems}\label{sec:lack-pe-pi}
The following theorem provides a sufficient condition for the lack of persistent excitation for a class of nonlinear systems.
\begin{proposition}\label{thm:general_lack_pe}
    \edit{Let there be a dynamical system with input and state pair $(u(\cdot), x(\cdot))$.
    A regressor signal $\phi(u(\cdot), x(\cdot))$ is not persistently exciting with respect to $(u(\cdot), x(\cdot))$ if:}
    \begin{enumerate}
        \item $u(t), x(t)$ converges to an equilibrium $u_*, x_*$ such that $(\phi_*^\top\phi_*)$ is rank deficient, and
        \item the feature map $\phi$ is continuous in \((u, x)\),
    \end{enumerate}
    \edit{where $\phi_* \coloneqq \phi(u_*, x_*)$.}
\end{proposition}
\begin{proof}
We first define the function $f: \mathbb{R}^{(\bar{n} + \bar{m}) \times (L+1)} \to \mathbb{R}$ as:
\begin{equation*}
    f(\mathbf{z}_{l:l+L}) \vcentcolon= \lambda_{\min}\left(\sum_{k=l}^{l+L}(\phi^\top\phi)\left(u(t_k), x(t_k)\right)\right),
\end{equation*}
where $\mathbf{z}_{l:l+L} \vcentcolon= [z(t_l), \dots, z(t_{l + L})]$, and $z(t)\coloneqq [u^\top(t), x^\top(t)]^\top$.
Let $\mathbf{z}_* \vcentcolon= [z_*, \dots, z_*] \in \mathbb{R}^{\bar{n}\times (L+1)}$.
To see that $f$ is a continuous function, we note the following. 
\begin{enumerate}
    \item Since $\phi(\cdot)$ is continuous in $u, x$ by definition, $\phi^\top(\cdot)$ is continuous, and $(\phi^\top\phi)(\cdot) \vcentcolon= \phi^\top(\cdot)\phi(\cdot)$ is also continuous by \cite[Theorem~4.9 and 4.10(a)]{rudin1964principles}.
    \item $\sum_{k=l}^{l+L}(\phi^\top\phi)(\cdot)$ is continuous by the same theorems.
    \item The eigenvalue function $\lambda_{\min}(\cdot)$ is continuous by \cite[Theorem~6.3.12]{horn2012matrix}.
\end{enumerate}
Therefore, the composite function $f$ is continuous.
Since $f$ is a continuous function, then by definition~\cite[Def. 4.5]{rudin1964principles}, for every $\epsilon > 0$, there exists a $\delta > 0$ such that:
\begin{equation}\label{eq:thm:general_lack_pe:proof:epsilon_bound}
    |f(\mathbf{z}_{l:l+L}) - f(\mathbf{z}_*)| < \epsilon
\end{equation}
for all $\mathbf{z}_{l:l+L}$ that satisfy $\|\mathbf{z}_{l:l+L} - \mathbf{z}_*\| < \delta$.

We want to show that for each $\epsilon$, there exists a $l_{\epsilon}$ such that $\|\mathbf{z}_{l:l+L} - \mathbf{z}_*\| < \delta$ for all data sizes $L$ and sample index $l \geq l_{\epsilon}$.
% when $z(t)$ converges to an equilibrium $z_*$.
Recall that $z(t)$ converges to an equilibrium $z_*$ by the \edit{first} condition in the \edit{proposition} statement.
\edit{The convergence of $z(t)$} implies that every consecutive finite sub-sequence in the input-state trajectory, $\{z(t_l), \dots, z(t_{l+L})\} \subset \{z(t_{0}), \dots, z(t_{k}), \dots\}$,
% , such that every $x_k \in \{x_l, \dots, x_{l+L}\}$ 
is within some $\delta$-neighborhood of $z^{*}$. 
In other words, there exists a\edit{n} $l_{\epsilon} \geq 0$ \edit{such} that, if $l > l_{\epsilon}$, then there exists a $\delta>0$:
\begin{equation}\label{eq:thm:general_lack_pe:proof:delta_bound}
    \left\|\begin{bmatrix}
    \delta_l, \dots, \delta_{l+L}
    \end{bmatrix}^\top\right\|_{\infty} \leq \delta~\forall{L > 0},
\end{equation}
where $\{z_{*} + \delta_{l}, \dots, z_{*} + \delta_{l+L}\} = \{z(t_l), \dots, z(t_{l+L})\}.$

Therefore, by the continuity of $f$ and convergence of $z(t)$, 
for every $\epsilon > 0$, there exists a\edit{n} $l_\epsilon$ inferring a $\delta$-neighborhood around $z_*$ such that every finite consecutive sub-sequence \edit{is} bounded as stated in~\eqref{eq:thm:general_lack_pe:proof:delta_bound}. Consequently,  
% Such that, 
from \eqref{eq:thm:general_lack_pe:proof:epsilon_bound}, we have:
\edit{\begin{equation*}%\label{thm:general_lack_pe:pf:step_01}
    |f(\mathbf{z}_{l:l+L}) - f(\mathbf{z}_*)| < \epsilon.
\end{equation*}
If $f(\mathbf{z}_{l:l+L})$ is smaller than $f(\mathbf{z}_*)$, then the corresponding sum of $\phi^T\phi$ is rank deficient and thus \eqref{thm:general_lack_pe:pf:step_03} holds. If it is larger than $f(\mathbf{z}_*)$, then we have}
\begin{align}
    % &|f(\mathbf{z}_{l:l+L}) - f(\mathbf{z}_*)| < \epsilon\nonumber\\
    \Rightarrow& f(\mathbf{z}_{l:l+L})) < f(\mathbf{z}_*) + \epsilon \label{thm:general_lack_pe:pf:step_01}\\
    % \Rightarrow& f(\mathbf{z}_{l:l+L}) < \epsilon \label{thm:general_lack_pe:pf:step_02}\\
    \Rightarrow& \lambda_{\min}\left(\sum_{k=l}^{l+L} (\phi^\top \phi)(u(t_k), y(t_k))\right) < \epsilon \label{thm:general_lack_pe:pf:step_02}\\
    \Rightarrow& \sum_{k=l}^{l+L} \phi_k^\top \phi_k \not\succeq a I  \quad \forall l \geq l_{\epsilon} \label{thm:general_lack_pe:pf:step_03}
\end{align}
near the equilibrium \edit{$(u_*, x_*)$}. 
Note that \edit{the} step \edit{from} \eqref{thm:general_lack_pe:pf:step_01} \edit{to} \eqref{thm:general_lack_pe:pf:step_02} is possible because \edit{$(\phi_*^\top\phi_*)$} is rank deficient by assumption.
Additionally, the transition from \eqref{thm:general_lack_pe:pf:step_02} to \eqref{thm:general_lack_pe:pf:step_03} is justified since, for any $a > 0$, one can find a $0 < \epsilon < a$.
Hence, the positive definite condition of the partial information matrix is effectively violated around $(u_*, x_*)$.
% $\phi$
\end{proof}

%\eqref{thm:general_lack_pe:pf:step_01}-\eqref{thm:general_lack_pe:pf:step_03}.}

In an ideal scenario, 
when \edit{we are free to design the control input $u(t)$}, 
condition 1) in Proposition~\ref{thm:general_lack_pe} cannot hold, 
and one can ensure the rank of $\sum_{k=l}^{l+L} \phi_k^\top \phi_k$ by injecting a sinusoidal signal into $u(t)$.
However, this approach might not be possible when $u(\cdot) \equiv 0$, or when the cost of signal injection is too high.
% Nevertheless, it is important to recognize that a completely unrestricted $u(t)$ is often not feasible during closed-loop identification due to the internal model principle~\cite{bengtsson1977output}.

Proposition~\ref{thm:general_lack_pe} provides a simple way to check the lack of PE.
Furthermore, Proposition~\ref{thm:general_lack_pe} sheds light on alternative approaches to addressing the issue of insufficient PE.
One approach involves model reduction to guarantee that the matrix $(\phi_*^\top\phi_*)$ is  full rank at the intended equilibrium.
The following example demonstrates the lack of excitation for a class of spreading systems with significant application value.

\begin{example}[\edit{SIS Epidemic Model Online Identification}]\label{ex:lack_pe_networked_SIS_SIR}
    The regressor signal of the susceptible-infected-susceptible~(SIS) epidemic model is not persistently exciting.
    We first define the model structure of the SIS dynamics:
    \begin{equation}\label{eq:simple_sis}
        \dot I(t) = (1 - I(t))\beta I(t) - \gamma I(t),
    \end{equation}
    % \begin{equation}\label{eq:simple_sis}
    %     \dot I(t) = (1 - I(t))\beta I(t) - \gamma I(t),
    % \end{equation}
    where $I(t) \in [0, 1]$ is the infected proportion, $\beta$ is the infection rate, $\gamma$ is the recovery rate.
    Let $\psi_k \coloneqq \dot I(t_k)$, and $\phi_k \coloneqq \begin{bmatrix}
        (1 - I(t_k))I(t_k) & -I(t_k)
    \end{bmatrix}$, and $\theta \coloneqq \begin{bmatrix}
        \beta & \gamma
    \end{bmatrix},$
    where $I(t)$ is the state, corresponding to $x(t)$ in \eqref{dyn_constraint}\edit{. In this example the} 
    input signal $u(t) \equiv 0$ 
    and \edit{the derivative $\dot I(t)$ is the dependent variable in the online estimation problem.}
    Note that the regressor $\phi_k$ is continuous in $I$.
    If $I(t_0) = 0$, the lack of excitation is trivial; otherwise, if $I(t_0) > 0$, the system converges to an equilibrium $I_* = 1 - \gamma/\beta$ characterized by the system parameters~\cite{pare2020modeling} when $\beta > \gamma$.
    Furthermore, since $\phi_{*}^\top\phi_{*}$ is a rank-one matrix, $\{\phi_k\}$ is not persistently exciting with respect to~\eqref{eq:simple_sis} by Proposition~\ref{thm:general_lack_pe}.
    
    When the gradient descent algorithm in \cite{narendra2012stable} is applied to the online estimation problem:
    \begin{equation}\label{eq:sisExam:AdaIdLaw}
        \hat{\theta}_{k+1} = \hat{\theta}_k + \phi_k^\top\left(\psi_k - \phi_k \hat{\theta}_k\right),
    \end{equation}
    we can see that the $\hat{\theta}_k$ fails to converge to the true parameters in Fig.~\ref{fig:ex:contour}.
    \begin{figure}%[h]
        \centering
        \begin{subfigure}{\columnwidth}
            \includegraphics[width=\columnwidth]{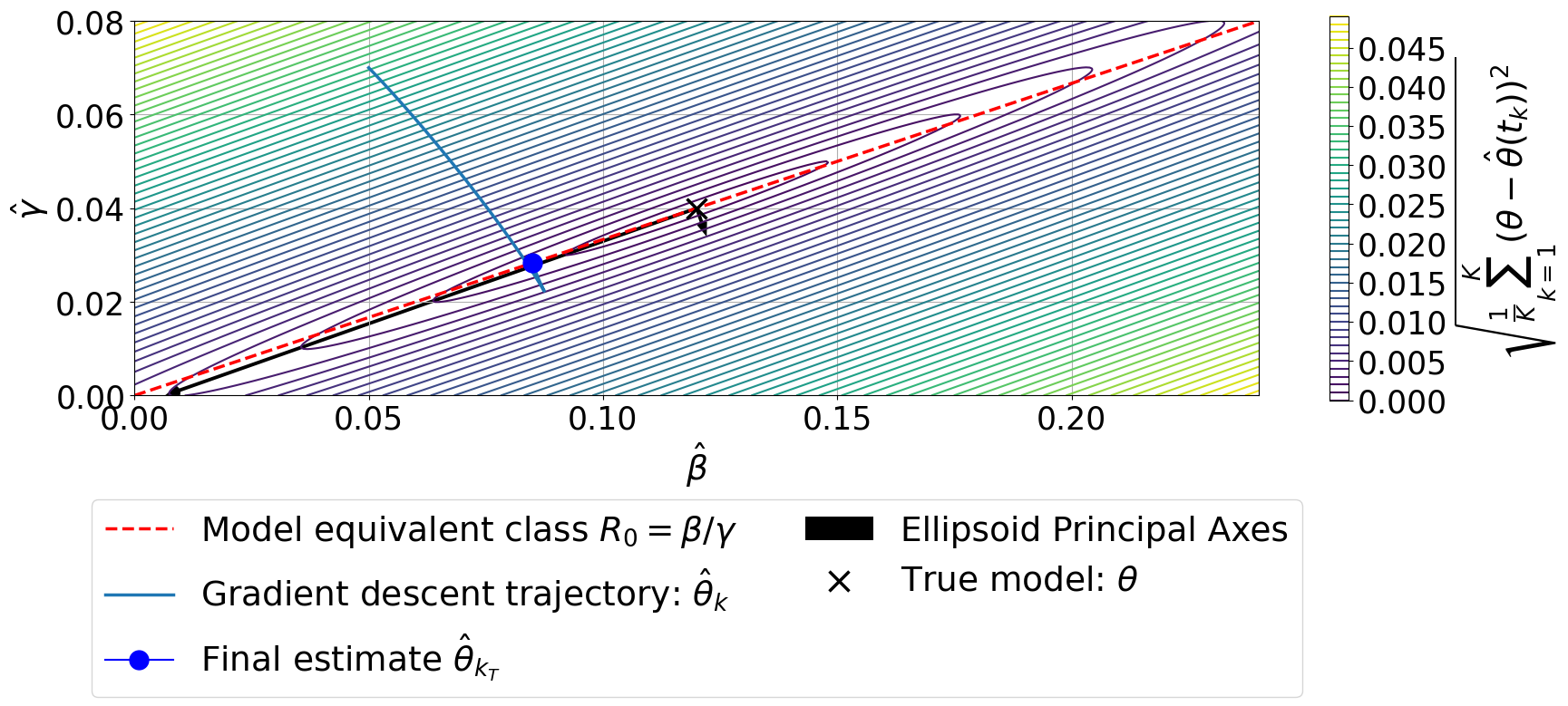}
            \caption{Error contour and estimation trajectory}
            \label{fig:ex:contour}
        \end{subfigure}
        \begin{subfigure}{\columnwidth}
            \includegraphics[width=\columnwidth]{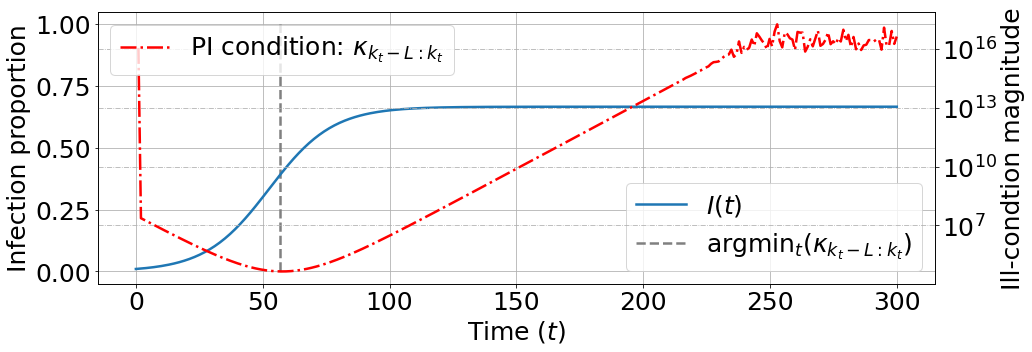}
            \caption{Measured state and PI condition}
            \label{fig:ex:traj}
        \end{subfigure}
        \caption{
            The trajectory of $\hat\theta_k$ on the RMSE contour plot is shown in Fig.~\ref{fig:ex:contour} for a noiseless SIS model $(\beta = 0.12$, $\gamma = 0.04$, $I(0) = 0.01$, $\hat{\beta}_{k_0} = 0.05$, $\hat{\gamma}_{k_0} = 0.07)$.
            The red dotted line in Fig.~\ref{fig:ex:contour} is the subspace of parameters characterized by the same reproduction number $\beta/\gamma$. 
            We choose the signal length $L=2$ for computing the PI condition in Fig.~\ref{fig:ex:traj}.
        }
        \label{fig:ex:error_main}
    \end{figure}
    Instead, the estimator $\hat{\theta}_k$ gravitates towards the model subspace defined by the reproduction number $\mathcal{R}_0 = \beta/\gamma > 1$, as denoted by the red dotted line in Fig.~\ref{fig:ex:contour}.
    This phenomenon implies an alternative reparameterization $(\tilde\theta, \tilde\phi, \tilde\psi)$ of \eqref{eq:simple_sis} that is robust against rank deficiency near $I_*$ is needed.

    In Fig.~\ref{fig:ex:traj}, the moving PI condition $\kappa_{k_t-L:k_t}$ as defined in Definition~\ref{def:pi_condtion}, with signal duration of $L=2$ attains its minimum at $t=57$.
    Subsequently, $\kappa_{k_t-L:k_t}$ increases monotonically until the information matrix loses its invertibility after $t=250$.
    This phenomenon suggests that the information carried by $\{\phi_k\}$ can be decomposed into the transient and the steady-state components.
    Furthermore, 
    transient information persists only for the initial period between $t=[0, 100]$ and reaches its peak at $t=57$, as we observe in Fig.~\ref{fig:ex:traj}.
    Once the system shifts from the transient phase to the steady-state, transient excitation is lost and the associated information matrix becomes ill-conditioned.
\end{example}

\section{Greedily-Weighted Recursive Least Squares}\label{sec:GWRLS}
Within the context of online parameter estimation, one of the issues that we face is the insufficient richness of the information provided by the input and state trajectory\edit{, as shown in the above example}.
Traditional solutions to the problem involve injecting exciting signals into the inputs. 
However, some other studies suggest that properly forgetting/remembering information in the past will help alleviate the diminishing richness of signal information over time\cite{goel2020recursive}.
This section delves into the challenges of diminishing information richness by analyzing the lack of persistent excitation and poor practical identifiability problems.
We propose a solution to the above problems based on the notion of an excitation set.
\subsection{Excitation Set}\label{sec:exc-set}
One natural solution to the lack of PE and PI problems in Example~\ref{ex:lack_pe_networked_SIS_SIR} is to incorporate second-order information when computing the descent direction upon the arrival of every new data point.
We introduce the following modification to the exponential forgetting recursive least squares~(EF-RLS) algorithm.
The proposed algorithm is detailed in Algorithm~\ref{alg:g_rls}, and we refer to it as the Greedily-Weighted Recursive Least Squares (\edit{GW-RLS}) Algorithm. We now introduce the notion of the \textit{optimal excitation set} and the \textit{greedy excitation set}
to illustrate the workings of the \edit{GW-RLS} Algorithm.

\begin{definition}[Optimal Excitation Set]\label{def:optimal_set}
    % A subset of data points $\mathcal{E}$ in $\mathcal{D}_K = \{k\in\mathbb{Z}: k_0 \leq k \leq K\}$
    % is optimally exciting if: 
    % \begin{equation*}
    %     \mathcal{E}_{\text{opt}}(K) = \underset{\substack{E \subseteq \mathcal{D}_K}}{\arg\min}\ \kappa\Big(\sum_{k \in E} \phi_k^\top \phi_k\Big).
    % \end{equation*}
    \edit{An optimal excitation set $\mathcal{E}_{\text{opt}}$ is a subset of data indices $\mathcal{E} \subset \{0, 1, \dots, K\} =: \mathcal{D}$
    such that: }
    \begin{equation*}
        \mathcal{E}_{\text{opt}}(K) = \underset{\substack{{\edit{\mathcal{E}}} \subseteq \mathcal{D}}}{\arg\min}\ \kappa\Big(\sum_{k \in \edit{\mathcal{E}}} \phi_k^\top \phi_k\Big).
    \end{equation*}
\end{definition}

Similarly, we define the greedy excitation set for a streamlined approximation of its optimal counterpart.

\begin{definition}[Greedy Excitation Set]\label{def:greedyset}
    The $K^{th}$ data point belongs to the greedy excitation set $\mathcal{E}_{\text{g}}(K)$ if it does not deteriorate the condition number \edit{of the information matrix} induced by $\mathcal{E}_{\text{g}}(K-1)$:
    % \vspace{-3ex}
    \begin{equation*}
    \begin{small}
        \kappa\Big(\sum_{\substack{ k \in\mathcal{E}_{\text{g}}(K-1)\\ \bigcup \{K\}}} \phi_k^\top \phi_k\Big) \leq \kappa\Big(\sum_{k \in \mathcal{E}_{\text{g}}(K-1)} \phi_k^\top \phi_k\Big).
        \end{small}
    \end{equation*}
    % \end{small}
\end{definition}
\begin{remark}[Limitations of using the condition number]
    Selecting data by minimizing the condition number $\kappa(H)=\kappa\big(\sum_{k\in\mathcal{E}}\phi_k^\top\phi_k\big)$ helps avoid covariance windup but may obscure the requirement that the smallest singular value $\sigma_{\min}(H)$ grows~\cite{lai1982least}.
    In noisy settings, a strict condition-number rule may also prefer realizations with unusually small residuals, increasing the risk of overfitting.
    Practitioners should therefore monitor $\sigma_{\min}(H)$ alongside $\kappa(H)$ and adopt simple safeguards, e.g., occasionally accepting points that raise $\sigma_{\min}$.
    In this work, we focus on the study of condition-number–based greedy selection within an EF-RLS framework.
    A formal study of greedy selection policies that directly control $\sigma_{\min}$ is left to future work.
\end{remark}
% We investigate the possibility of implementing a greedy excitation set with the EF-RLS in the next section.
\begin{comment}
    An immediate application of the Greedy Excitation Set is an extension of the sliding window recursive least squaresalgorithm~\cite{zhang2000some, zhao1994sliding}.
    Moreover, we investigate the possibility of implementing a greedy excitation set with the EF-RLS in the next section.
\end{comment}

\subsection{Excitation Set Based Recursive Least Squares}\label{sec:exc-set-rls}
The main modification of Algorithm~\ref{alg:g_rls} in the EF-RLS filtering lies in its strategy to preserve data points from the optimally exciting set, ensuring they remain undiluted by subsequent less-informative data.
Since solving for the optimal excitation set at each update iteration is expensive, we propose a recursively approach to approximate the optimally exciting set through a greedy excitation set.
% via a greedy algorithm.
The pseudo-code for \edit{GW-RLS} is provided below.

\begin{algorithm}
    \caption{Greedily-Weighted Recursive Least Squares}\label{alg:g_rls}
    \begin{algorithmic}[1]
        \Initialize{
            $H^{(e)}_0$, $P_0$, $\theta_0$, $\Phi^{(e)}_0$, $\upsilon^{(e)}_0$,
            \edit{$\alpha$}.
        }
        \Inputs{\label{alg:g_rls:line:input}
            Input datum $(\phi_{k}, \psi_{k})$
        }
        \If{$k > 0$}
            \State $H^{(e)} \gets H^{(e)}_{k-1} + \phi_k^\top\phi_k$\label{alg:g_rls:line:greedy_begin}
            \If{$\kappa(H^{(e)}) \leq \kappa{(H^{(e)}_{k-1})}$}\label{alg:g_rls:line:best_condition_num_sf}
                \State $\Phi^{(e)}_{k}, H^{(e)}_{k} \gets [{\Phi^{(e)}_{k-1}}^\top, \phi_k^\top]^\top, H^{(e)}$\label{alg:g_rls:line:define_Phi_e_in_greedy}
                \State $\upsilon^{(e)}_{k} \gets \upsilon^{(e)}_{k-1} + \phi_k^\top\psi_k$\label{alg:g_rls:line:define_upsilon_e_in_greedy}
                \State $\Phi \gets (\sqrt{1-\alpha})\Phi^{(e)}_{k}$\label{alg:g_rls:line:define_Phi_in_greedy}
                \State $H, \upsilon \gets (1-\alpha)H^{(e)}_{k}, (1-\alpha)\upsilon^{(e)}_{k}$\label{alg:g_rls:line:define_var_in_greedy}
            \Else
                \State $\Phi^{(e)}_{k}, H^{(e)}_{k}, \upsilon^{(e)}_{k} \gets \Phi^{(e)}_{k-1}, H^{(e)}_{k-1}, \upsilon^{(e)}_{k-1}$\label{alg:g_rls:line:define_var_out_greedy}
                \State $\Phi \gets [(\sqrt{1-\alpha}){\Phi^{(e)}_{k}}^\top, \phi_k^\top]^\top$\label{alg:g_rls:line:define_Phi_out_greedy}
                \State $H \gets (1-\alpha)H^{(e)}_{k} + \phi_k^\top\phi_k$\label{alg:g_rls:line:define_H_out_greedy}
                \State $\upsilon \gets (1-\alpha)\upsilon^{(e)}_{k} + \phi_k^\top\psi_k$\label{alg:g_rls:line:define_upsilon_out_greedy}
            \EndIf\label{alg:g_rls:line:greedy_end}
            \State $P_{k} \gets \frac{1}{\alpha}P_{k-1} (I - {\Phi}^\top (\alpha I + \Phi P_{k-1} \Phi^\top)^{-1} \Phi P_{k-1})$\label{alg:g_rls:line:P_update}
            \State $\hat{\theta}_{k} \gets \hat{\theta}_{k-1} + P_{k}(\upsilon - H\hat{\theta}_{k-1})$\label{alg:g_rls:line:theta_update}
        \EndIf
    \end{algorithmic}
\end{algorithm}

In words, the \edit{GW-RLS} Algorithm can be summarized as follows:  \edit{upon receiving a new data point (Line~\ref{alg:g_rls:line:input}), we check to see if that data point improves the condition number of the information matrix (Line \ref{alg:g_rls:line:best_condition_num_sf}).}
% Lines~\ref{alg:g_rls:line:greedy_begin}-\ref{alg:g_rls:line:greedy_end} in Algorithm~\ref{alg:g_rls} update the greedy exciting set and the corresponding hyper-parameters for computing $\hat{\theta}_{k+1}$ upon every incoming datum.
If $\kappa(H^{(e)}) \leq \kappa{(H^{(e)}_{k-1})}$, then we \edit{add the new data point to the excitation set and} update the regressor matrix $\Phi^{(e)}_{k}$, the information matrix $H^{(e)}_{k}$\edit{,} and the corrector term $\upsilon^{(e)}_{k}$ of the excitation set (Lines~\ref{alg:g_rls:line:define_Phi_e_in_greedy}-\ref{alg:g_rls:line:define_upsilon_e_in_greedy}); else, $\Phi^{(e)}_{k}, H^{(e)}_{k}, \upsilon^{(e)}_{k}$ remain unchanged \edit{(Line~\ref{alg:g_rls:line:define_var_out_greedy})}.
Lastly, \edit{we} use $\Phi^{(e)}_{k}, H^{(e)}_{k}, \upsilon^{(e)}_{k}$ to \edit{update} $\Phi, H, \upsilon$  \edit{(Lines \ref{alg:g_rls:line:define_Phi_in_greedy}-\ref{alg:g_rls:line:define_var_in_greedy}, \ref{alg:g_rls:line:define_Phi_out_greedy}-\ref{alg:g_rls:line:define_upsilon_out_greedy}),} which are \edit{then employed to compute} the inverse information matrix $P_{k}$ \edit{(Line~\ref{alg:g_rls:line:P_update})} and the new estimate $\hat{\theta}_{k}$ \edit{(Line~\ref{alg:g_rls:line:theta_update})}.

% \noindent

\edit{\begin{table}
    \centering
    \renewcommand{\tablename}{\centering \textbf{Table}} % Centering the title
    \resizebox{0.47\textwidth}{!}{%
    \begin{tabular}{|c|l|l|}
        \hline
        \textbf{Variable} & \textbf{\centering Meaning} & \textbf{Initialization Values} \\
        \hline
        $H^{(e)}_0$ & Information matrix of the excitation set & $0$ \\
        $P_0$ & Prior covariance matrix & $P_0 \in \{\rho I : 10^3 \leq \rho \leq 10^6\}$ \\
        $\theta_0$ & Prior parametric mean & Any admissible value of $\theta$ \\
        $\Phi^{(e)}_0$ & Regressor matrix of excitation set & Empty sequence \\
        $\upsilon^{(e)}_0$ & Cross-correlation vector of excitation set & $0$ \\
        $\alpha$ & Exponential decaying rate & $0.95 \leq \alpha \leq 0.9999$ \\
        \hline
    \end{tabular}%
    }
    \caption{Initialization variables and their meanings}
    \label{tab:init_values}
\end{table}
}

\edit{Table~\ref{tab:init_values} outlines the meaning and values of the set of initial variables. We define $\kappa(0) = \infty$, so the first data point will be naturally accepted into the excitation set. Note that 
% the algorithm only keeps track of variables with a subscribe $k$, i.e. $H^{(e)}_k$, $P_k$, $\hat{\theta}_k$, $\Phi^{(e)}_k$, and $\upsilon^{(e)}_k$. 
% Hence, 
$\alpha, H^{(e)}_0$, $P_0$, $\hat{\theta}_0$, $\Phi^{(e)}_0$, and $\upsilon^{(e)}_0$ are the only variables that need to be initialized.}
For setting the values of $P_0$ and $\alpha$, we follow standard practice in the literature~\cite{ljung1995system}.
\edit{Furthermore, the choice of $\alpha$ and $P_0$ are quite flexible in the \edit{GW-RLS} framework.
Typical values used in EF-RLS are viable for \edit{GW-RLS}.
We set $P_0 = 10^5I$ and $\alpha=0.98$ in the simulations unless otherwise specified.
}
\begin{remark}[Well-conditioned Information Matrix]
    \edit{
    Notice that the maximum eigenvalue, $\lambda^{(e)}_{\max}(k)$, of $H^{(e)}_k$ is asymptotically bounded by its minimum eigenvalue, $\lambda^{(e)}_{\min}(k)$, under some regularity condition on the regressor signal.
    If the regressor signal $\{\phi_k\}$ satisfies the following inequality:}
    \begin{equation*}
        bI \succ \sum_{k=0}^L \phi_k^\top \phi_k \succ aI,
    \end{equation*}
    \edit{for some positive constants $L, a, b \in \mathbb{R}_{>0}$,
    which states that an initial interval of $\{\phi_k\}$ is well-conditioned,
    then, we have:}
    \begin{align*}
        \kappa(H_k) &\leq \frac{\lambda^{(e)}_{\max}(L)}{\lambda^{(e)}_{\min}(L)} < \frac{b}{a}\\
        \lambda^{(e)}_{\max}(k) &< (b/a) \lambda^{(e)}_{\min}(k), \qquad \forall k \geq L,
    \end{align*}
    \edit{since $\mathcal{E}_g(n)$ is constructed in a way that the condition number of its associated information matrix is non-increasing, as stated in Definition~\ref{def:greedyset}.
    The conditioning of $H^{(e)}_k$ guarantees that $H$ is well-conditioned at each update step.}
\end{remark}
Furthermore, the condition number $\kappa({H^{(e)}})$ can be recursively estimated, as detailed in \cite{benesty2005recursive}, to fully streamline Algorithm~\ref{alg:g_rls}.
We are \edit{now} ready to characterize the optimality of Algorithm~\ref{alg:g_rls}.
\begin{theorem}\label{thm:opt_theta}
    If $P_0$ is positive definite, 
    then for all $k \in [\edit{1}, \infty)$, $\hat{\theta}_{k+1}$ obtained by Algorithm~\ref{alg:g_rls} is the the unique minimizer of the cost function:
    % \vspace{-2ex}
    \begin{align}\label{eq:cost}
        C_{k}\edit{(\theta)} = \sum_{i=\edit{1}}^{k}w_{i,k}\|r_{i}(\edit{\theta})\|^2 + \alpha^{k+1}\|\edit{\theta} - \theta_0\|_{P_0^{-1}}^2
    \end{align}
    with the weighting function defined as:
    \begin{equation}\label{eq:greedy_weights}
        w_{i,k} =
        \begin{cases}
            (1-\alpha)\sum_{l=i}^{k}\alpha^{k-l} & \text{if $i \in \mathcal{E}_{\text{g}}(k)$,} \\
            \alpha^{k-i}               & \text{otherwise}.
        \end{cases}
    \end{equation}
\end{theorem}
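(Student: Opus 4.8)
The plan is to recognize that the cost \eqref{eq:cost}, viewed as a function of a free parameter vector (with the symbol $\hat{\theta}_k$ playing the role of the optimization variable), is a regularized weighted least-squares objective, and then to show by induction that the information-form quantities propagated by Algorithm~\ref{alg:g_rls} are exactly the normal-equation quantities of this objective.

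First I would settle existence, uniqueness, and the characterization of the minimizer. Collecting terms, the Hessian of $C_k$ is $M_k := \sum_{i=0}^k w_{i,k}\phi_i^\top\phi_i + \alpha^{k+1}P_0^{-1}$ and the affine part yields $b_k := \sum_{i=0}^k w_{i,k}\phi_i^\top\psi_i + \alpha^{k+1}P_0^{-1}\theta_0$. Since $0<\alpha<1$ makes every weight in \eqref{eq:greedy_weights} nonnegative and $P_0\succ 0$ forces $\alpha^{k+1}P_0^{-1}\succ 0$, we get $M_k\succ 0$; hence $C_k$ is strictly convex and its unique minimizer is the solution $\theta^\star=M_k^{-1}b_k$ of the normal equations $M_k\theta=b_k$. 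This is the only place the hypothesis $P_0\succ 0$ is used.

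Next I would put the algorithm into information (inverse-covariance) form. Applying the matrix inversion lemma to Line~\ref{alg:g_rls:line:P_update} shows the update is equivalent to $P_k^{-1}=\alpha P_{k-1}^{-1}+H$, and left-multiplying Line~\ref{alg:g_rls:line:theta_update} by $P_k^{-1}$ and substituting this identity gives $P_k^{-1}\hat{\theta}_k=\alpha P_{k-1}^{-1}\hat{\theta}_{k-1}+\upsilon$. Thus both $P_k^{-1}$ and $P_k^{-1}\hat{\theta}_k$ satisfy a recursion of the form $X_k=\alpha X_{k-1}+(\text{increment})$, with increments $H$ and $\upsilon$. The invariants I would carry are $P_k^{-1}=M_k$ and $P_k^{-1}\hat{\theta}_k=b_k$ (together with $H^{(e)}_k=\sum_{i\in\mathcal{E}_{\text{g}}(k)}\phi_i^\top\phi_i$ and $\upsilon^{(e)}_k=\sum_{i\in\mathcal{E}_{\text{g}}(k)}\phi_i^\top\psi_i$), which immediately yield $\hat{\theta}_k=M_k^{-1}b_k=\theta^\star$.

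The crux is to verify that $M_k$ and $b_k$ obey the matching recursions $M_k=\alpha M_{k-1}+H$ and $b_k=\alpha b_{k-1}+\upsilon$, with $H,\upsilon$ exactly the increments produced by the two branches of Lines~\ref{alg:g_rls:line:greedy_begin}--\ref{alg:g_rls:line:greedy_end}. Here I would use three facts: (i) membership of a past index $i<k$ in the greedy set is stable, since Algorithm~\ref{alg:g_rls} only ever appends to $\Phi^{(e)}$, so $\mathcal{E}_{\text{g}}(k-1)\subseteq\mathcal{E}_{\text{g}}(k)$ and $k$ is the only possibly-new element; (ii) using the closed form $(1-\alpha)\sum_{l=i}^k\alpha^{k-l}=1-\alpha^{k-i+1}$, a direct computation gives $w_{i,k}-\alpha w_{i,k-1}=1-\alpha$ when $i\in\mathcal{E}_{\text{g}}$ and $w_{i,k}-\alpha w_{i,k-1}=0$ when $i\notin\mathcal{E}_{\text{g}}$ — the vanishing increment for non-exciting points being precisely the mechanism by which the auxiliary cost discards them; and (iii) the new point contributes $w_{k,k}\phi_k^\top\phi_k$ with $w_{k,k}=1-\alpha$ in the accept branch and $w_{k,k}=1$ in the reject branch. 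Substituting and recognizing $H^{(e)}_{k-1}=\sum_{i\in\mathcal{E}_{\text{g}}(k-1)}\phi_i^\top\phi_i$, the accept branch gives $M_k-\alpha M_{k-1}=(1-\alpha)H^{(e)}_k$ and the reject branch gives $M_k-\alpha M_{k-1}=(1-\alpha)H^{(e)}_k+\phi_k^\top\phi_k$, matching Lines~\ref{alg:g_rls:line:define_var_in_greedy} and \ref{alg:g_rls:line:define_H_out_greedy}; the identical algebra with $\psi_i$ in place of $\phi_i$ matches $\upsilon$. With the base case supplied by the initialization (the prior term $\alpha^{k+1}P_0^{-1}$ and $\hat{\theta}_0=\theta_0$), the induction closes and $\hat{\theta}=M_k^{-1}b_k$ is the unique minimizer. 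I expect the main obstacle to be exactly this case analysis of the weight recursion, in particular tracking that greedy membership established at earlier steps does not change and aligns with the two branches of the algorithm; the matrix-inversion-lemma step and the convexity argument are routine.
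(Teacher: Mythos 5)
Your proposal is correct and takes essentially the same route as the paper's proof: an induction that writes $C_k$ as a quadratic in the parameter, shows its Hessian and linear term obey a recursion of the form $X_k=\alpha X_{k-1}+(\text{branch-dependent increment})$ matching Lines~\ref{alg:g_rls:line:define_var_in_greedy}, \ref{alg:g_rls:line:define_H_out_greedy}--\ref{alg:g_rls:line:define_upsilon_out_greedy}, and uses the matrix inversion lemma to identify the algorithm's $P_k$ update with the inverse of that Hessian --- your information-form invariants $P_k^{-1}=M_k$, $P_k^{-1}\hat{\theta}_k=b_k$ are just a rearrangement of the paper's direct expansion of the minimizer $-A_k^{-1}b_k$. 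The one substantive difference is in your favor: the paper simply asserts the recursions for $A_k$ and $b_k$, whereas you actually derive them from greedy-set monotonicity and the weight increments $w_{i,k}-\alpha w_{i,k-1}=1-\alpha$ on $\mathcal{E}_{\text{g}}(k)$ and $0$ off it, which is precisely the step the paper leaves unproved.
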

\begin{proof}
    The proof leverages mathematical induction to proceed, and it suffices to show the inductive step.
    We first note that $C_{k}(\hat{\theta})$ can be written in terms of:
        $C_{k}(\hat{\theta}) = \hat{\theta}^\top A_k\hat{\theta} + 2b_k^\top \hat{\theta} + c_k$,
    where $A_k, b_k, c_k$ are:
        \begin{align*}
            A_k & := \sum_{i=0}^k w_{i, k} \phi_i^\top \phi_i + \alpha^{k+1}P^{-1}_0                 \\
            b_k & := -\sum_{i=0}^k w_{i, k} \phi_i^\top \psi_i - \alpha^{k+1}P^{-1}_0\theta_0            \\
            c_k & := \sum_{i=0}^k w_{i, k} \psi_i^\top \psi_i + \alpha^{k+1}\theta_0^\top P^{-1}_0\theta_0.
        \end{align*}
    Then, $A_k, b_k$ can be computed recursively as
        \begin{align*}
        A_k & = \alpha A_{k-1} +\sum_{i\in U}\omega_i \phi_i^\top\phi_i \\
        b_k & = \alpha b_{k-1} - \sum_{i\in U}\omega_i \phi_i^\top \psi_i,
    \end{align*}
    where $U = \mathcal{E}_{\text{g}}(k) \bigcup \{k\}$, and
    \edit{
    $$
        \omega_i = \begin{cases}
            (1 - \alpha) & \text{if $i \in \mathcal{E}_{\text{g}}(k)$}, \\
            1 & \text{otherwise}.
        \end{cases}
    $$}
    By way of induction, assume $\exists~k\in \mathbb{N}$ such that $A_{k-1}$ is positive definite and the unique optimizer of $C_{k-1}$ is $\hat{\theta}_{k} = -A_{k-1}^{-1}b_{k-1}$.
    We define $P_{k+1} := A_k^{-1}$. 
    Since $A_{k-1}$ is positive definite,
    we can apply the matrix inversion lemma\cite[p, 304]{bernstein2009matrix} and obtain a positive definite $P_{k+1}$:
    \begin{align*}
        P_{k+1} & = A_k^{-1}\\
        & = \frac{1}{\alpha}\left(A_{k-1} +\frac{1}{\alpha}\left(\sum_{i\in U}\omega_i \phi_i^\top\phi_i\right)\right)^{-1}\\
        &= \frac{1}{\alpha}P_k - \frac{1}{\alpha}P_k{\Phi}^\top (\alpha I + \Phi P_k \Phi^\top)^{-1} \Phi P_k,
    \end{align*}
    where 
    \edit{
    $$
        \Phi = \begin{cases}
            (\sqrt{1 - \alpha})\Phi^{(e)}_{k+1} &\text{if $k \in \mathcal{E}_{\text{g}}(k)$,} \\
            [(\sqrt{1 - \alpha})(\Phi^{(e)}_{k+1})^\top, \phi_k^\top]^\top & \text{otherwise},
        \end{cases}
    $$}and $\Phi^{(e)}_{k+1} = \begin{bmatrix}
                \phi_{k_1}^\top, \phi_{k_2}^\top, \dots, \phi_{k_r}^\top
            \end{bmatrix}_{k_i\in \mathcal{E}_{\text{g}}(k)}^\top$, which satisfies the computation of $\Phi^{(e)}_{k+1}$ and $\Phi$ on Lines~\ref{alg:g_rls:line:define_Phi_e_in_greedy}, \ref{alg:g_rls:line:define_Phi_in_greedy}, \ref{alg:g_rls:line:define_var_out_greedy}, and \ref{alg:g_rls:line:define_Phi_out_greedy} of Algorithm~\ref{alg:g_rls}.
    By the quadratic minimization lemma~\cite{bernstein2009matrix}, the unique minimizer of $C_{k}(\hat{\theta})$ is:
    \begin{small}
    \begin{align*}
        \hat{\theta}_{k+1} & = -A_k^{-1}b_k                           \\
                     & = A_k^{-1}\Big(-\alpha b_{k-1} + \sum_{i \in U}\omega_i\phi_i^\top \psi_i\Big)\\
                     & = A_k^{-1}\Big(\alpha A_{k-1} \hat{\theta}_k + \sum_{i \in U}\omega_i\phi_i^\top \psi_i\Big)\\
                     & = A_k^{-1}\Big(\Big(A_k - \sum_{i \in U}\omega_i\phi_i^\top\phi_i\Big)\hat{\theta}_k + \sum_{i \in U}\omega_i\phi_i^\top \psi_i\Big)\\
                     & = \hat{\theta}_k + A_k^{-1}\Big(\sum_{i \in U}\omega_i\phi_i^\top \psi_i - \Big(\sum_{i \in U}\omega_i\phi_i^\top\phi_i\Big)\hat{\theta}_k \Big)\\
                     & = \hat{\theta}_k + P_{k+1}(\upsilon - H\hat{\theta}_k),
    \end{align*}
    \end{small}
    
    \noindent 
    where $\upsilon := \sum_{i \in U}\omega_i\phi_i^\top \psi_i$ and $H := \sum_{i \in U}\omega_i\phi_i^\top\phi_i$ matches the computation of $\upsilon$ and $H$ on Lines~\ref{alg:g_rls:line:define_var_in_greedy}, 
    \ref{alg:g_rls:line:define_H_out_greedy}, and
    \ref{alg:g_rls:line:define_upsilon_out_greedy}
    of Algorithm~\ref{alg:g_rls}.
    By the principle of mathematical induction, $P_n$ is positive definite, and $\hat{\theta}_{n+1}$ is the unique minimizer of $C_n$ for all $n \in \mathbb{N}$.
\end{proof}

While Theorem~\ref{thm:opt_theta} characterizes the cost function \edit{that} Algorithm~\ref{alg:g_rls} optimizes, the structure of the weighting function $w_{i,k}$ might not be immediately obvious. 
The intuition of the choice of $w_{i,k}$ becomes apparent when considering its asymptotic behavior, as established below.

\begin{corollary}\label{cor:limit_cost}
    If $P_0$ is positive definite and all data \edit{with index} $\edit{i}\in \lim_{k\to \infty}\mathcal{E}_g(k)$ are obtained before $k = K$, for some $K < \infty$, and $\alpha < 1$, 
    then, as $k \to \infty$, $\hat{\theta}_{k+1}$ obtained by Algorithm~\ref{alg:g_rls} is the unique minimizer of the cost function:
    \begin{align}\label{eq:cost2}
        C_{k}(\hat{\theta}) = \sum_{i=\edit{1}}^{k}W_{i,k}\|r_i(\hat{\theta})\|^2,
    \end{align}
    with
    \begin{equation}\label{eq:cor:limit_weight}
        W_{i,k} = \begin{cases}
            1 & \text{if $i \in 
            \mathcal{E}_g(K)$}\\
            \alpha^{k-i} & \text{otherwise}.
        \end{cases}
    \end{equation}
\end{corollary}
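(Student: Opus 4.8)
The plan is to show that, under the stated hypotheses, the finite-$k$ cost $C_k$ of Theorem~\ref{thm:opt_theta} agrees with the claimed limiting cost~\eqref{eq:cost2} up to terms that vanish as $k\to\infty$. Since Theorem~\ref{thm:opt_theta} already identifies $\hat\theta_{k+1}$ as the unique minimizer of $C_k$, the minimizer of the limiting cost will coincide with it asymptotically. Because both costs are quadratic in the free variable $\hat\theta$, it suffices to compare their Hessians and linear terms coefficient-by-coefficient, reusing the decomposition $C_k(\hat\theta)=\hat\theta^\top A_k\hat\theta+2b_k^\top\hat\theta+c_k$ from the proof of Theorem~\ref{thm:opt_theta}.

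First I would use the hypothesis that every element of $\lim_{k\to\infty}\mathcal{E}_g(k)$ is acquired before index $K$. By Definition~\ref{def:greedyset} the greedy set is monotone --- an index is appended only when it does not worsen the condition number and is never removed --- so the hypothesis forces $\mathcal{E}_g(k)=\mathcal{E}_g(K)$ for all $k\ge K$. Consequently the partition of $\{0,\dots,k\}$ into excitation and non-excitation indices appearing in~\eqref{eq:greedy_weights} is frozen for all large $k$.

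Next I would evaluate the excitation-set weights in closed form. For $i\in\mathcal{E}_g(K)$ the geometric sum gives
\begin{equation*}
    w_{i,k} = (1-\alpha)\sum_{l=i}^{k}\alpha^{k-l} = (1-\alpha)\frac{1-\alpha^{k-i+1}}{1-\alpha} = 1-\alpha^{k-i+1},
\end{equation*}
which, since $i\le K$ is fixed and $\alpha<1$, tends to $1=W_{i,k}$ as $k\to\infty$; for $i\notin\mathcal{E}_g(K)$ the two weight rules already coincide, $w_{i,k}=\alpha^{k-i}=W_{i,k}$. The regularization term is handled at the level of coefficients: it contributes $\alpha^{k+1}P_0^{-1}$ to the Hessian and $-\alpha^{k+1}P_0^{-1}\theta_0$ to the linear term, both of which vanish as $k\to\infty$ because $\alpha<1$ while $P_0,\theta_0$ are fixed. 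Collecting these, the Hessian difference between the two costs is
\begin{equation*}
    A_k - \sum_{i=0}^{k}W_{i,k}\,\phi_i^\top\phi_i = \alpha^{k+1}P_0^{-1} - \sum_{i\in\mathcal{E}_g(K)}\alpha^{k-i+1}\,\phi_i^\top\phi_i \to 0,
\end{equation*}
with the analogous linear-term difference tending to zero for the same reason.

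The main obstacle is upgrading ``the quadratic coefficients converge'' to ``the minimizers converge.'' Since $\hat\theta_{k+1}=-A_k^{-1}b_k$ and the limiting minimizer is $-\tilde A_k^{-1}\tilde b_k$ with $\tilde A_k:=\sum_{i=0}^{k}W_{i,k}\phi_i^\top\phi_i$, I would apply a standard perturbation estimate for linear systems, which bounds $\|A_k^{-1}b_k-\tilde A_k^{-1}\tilde b_k\|$ by a multiple of $\|A_k-\tilde A_k\|$ and $\|b_k-\tilde b_k\|$ provided $\tilde A_k$ stays uniformly invertible. This uniform invertibility is exactly where the excitation set earns its role: the frozen block $\sum_{i\in\mathcal{E}_g(K)}\phi_i^\top\phi_i\succ0$ supplies a uniform lower bound on $\tilde A_k$ for every $k\ge K$, guaranteeing a unique minimizer and the convergence of the two minimizers. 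I would conclude that $\hat\theta_{k+1}$ is, in the limit, the unique minimizer of~\eqref{eq:cost2}.
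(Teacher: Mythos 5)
Your proposal is correct, and its core is the same as the paper's proof: the geometric-sum evaluation showing $w_{i,k} = 1-\alpha^{k-i+1}\to 1$ for excitation indices, the observation that the two weight rules already agree off the excitation set, and the vanishing of the ridge term $\alpha^{k+1}\|\hat{\theta}_k-\theta_0\|_{P_0^{-1}}^2$ because $\alpha<1$. Where you differ is in what happens after the coefficients converge: the paper stops at that point and simply cites Theorem~\ref{thm:opt_theta}, leaving implicit both why the index partition freezes (your monotonicity argument giving $\mathcal{E}_g(k)=\mathcal{E}_g(K)$ for $k\geq K$) and why convergence of the quadratic coefficients transfers to convergence of the minimizers (your perturbation estimate for $\|A_k^{-1}b_k-\tilde{A}_k^{-1}\tilde{b}_k\|$ under uniform invertibility of $\tilde{A}_k$). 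That last step is a genuine strengthening of the argument, with one caveat: Definition~\ref{def:greedyset} alone does not guarantee $\sum_{i\in\mathcal{E}_g(K)}\phi_i^\top\phi_i\succ 0$ --- a greedy set can remain rank-deficient if the data never excite all parameter directions --- so your uniform lower bound is an assumption in disguise rather than a consequence of the hypotheses. This is not a gap relative to the paper, since without such positive definiteness the limiting cost in \eqref{eq:cost2} need not have a unique minimizer at all and the corollary as stated suffers the same defect; your proof has the virtue of making that hidden requirement explicit.
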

\begin{proof}
    Note that, since $\alpha<1$, the second term in the cost function in \eqref{eq:cost}, $\alpha^{k+1}\|\hat{\theta} - \theta_0\|_{P_0^{-1}}^2$, goes to zero as $k \to \infty$.
    Furthermore, $\sum_{l=i}^{k}\alpha^{k-l}$ can be rewritten as 
    $\sum_{l=0}^{k-i}\alpha^{l}$,
    and since $\sum_{l=0}^{k-i}\alpha^{l}$ is a geometric sum,
    $\lim_{k\to \infty} (1-\alpha)\sum_{l=i}^{k}\alpha^{k-l} = 1$.
    Thus, the weighting function $w_{i,k}$ can be written as $W_{i,k}$ in \eqref{eq:cor:limit_weight}.  
    Therefore, by Theorem~\ref{thm:opt_theta}, Algorithm~\ref{alg:g_rls} obtains the optimal $\theta^* = {\arg\min}_{\hat{\theta}} C_{k}(\hat{\theta})$ for the simplified cost function in \eqref{eq:cost2}-\eqref{eq:cor:limit_weight} \edit{as $k\to \infty$}.
\end{proof}
Corollary~\ref{cor:limit_cost} characterizes the asymptotic behavior of the cost function, which Algorithm~\ref{alg:g_rls} minimizes, \edit{when no data points are added to the excitation set asymptotically.}
The algorithm stores the data points that belong to the greedy excitation set by giving them a weight of one while assigning the unexciting points exponentially decaying weights.

\section{Hyper-Parameters Resetting with Change Point Detection}\label{sec:fault-dect}
\edit{In the last section, we have investigated a way to overcome the lack of persistent excitation in online identification.
However, GW-RLS and other online estimators that rely on selective forgetting processes sacrifice adaptability for stability.
}
In this section, we \edit{propose a remedy to improve the adaptability of GW-RLS and potentially the wider class of online estimation algorithms.}
\edit{We do so by considering a memory resetting scheme for online identification process operating on} time-varying systems with piecewise constant parameters in time.
\edit{The memory resetting scheme implements} a change point detection algorithm that detects jumps in $\theta(t_k)$ when the predictive ability of the estimated parameter vector $\hat{\theta}_k$ deteriorates.
Once a jump is detected, the estimator memory is reinitialized, 
in order to avoid estimation errors \edit{generated by} outdated data points from irrelevant parameters in the past.

\subsection{Prediction Error Dynamics}
% We define the \textit{prediction residual} $e_k \coloneqq \psi_k - \phi_k\hat{\theta}_{k-1}$ and \textit{prediction error} $\|e_k\|^2$
% to quantify the predictability of the estimated parameter vector $\hat{\theta}_{k-1}$.
Recall the prediction residual $e_k:=\psi_k-\phi_k\hat{\theta}_{k-1}$ and its squared norm $\|e_k\|^2$ (prediction error), which quantifies the predictability of $\hat\theta_{k-1}$; $e_k$ is used instead of $r_k$ in~\eqref{eqn:residual} to emphasize the use of time-varying parameters.
We can see that the prediction residual $e_k$ is a stochastic process with 
% a non-increasing conditional mean given past information and 
time-varying means and variances.
% Consequently, the transformed prediction error, $\|e_k - \mu_k\|^2_{\Sigma^{-1}_k}$, follows a 
% $\chi_2$
% distribution.
\edit{In other words,} $\|e_k\|^2$ is a signal dependent on the past data $\Phi_k$, the true parameters $\theta(t_{k})$, and some hyper-parameters specific to an online estimator \edit{such as $\theta_0$ and $W_k$}. 
% under the Gaussian noise assumption
Additionally, 
% we expect to observe a change in $\mu_k$ during abrupt changes in $\theta(t_k)$ because it is a function of $\theta(t_k)$, the new data point $\phi_{k+1}$, and the past data points $\Phi_k$.
% Lastly, 
since the prediction error spans several orders of magnitude, 
we apply a negative logarithmic transformation to $\|e_k\|^2$ to define the notion of model predictability.
\begin{definition}
    The \textit{model predictability} at time $t_k$ is defined as $Y_k \coloneqq -\log(\|e_k\|^2)$, where $e_k\coloneqq \psi_k - \phi_k\hat{\theta}_{k-1}$.
\end{definition}
The \textit{model predictability} $Y_k\in (-\infty, \infty)$ is the negative order of magnitude of the prediction error.
While its value may depend on the application, a high $Y_k$ indicates good performance of the upstream online estimator.
% \edit{Also, note that, if the estimator  the dynamics of $\{Y_k\}$ follow a sub-martingale, i.e., $\mathbb{E}[Y_k - Y_{k-1} \mid \mathcal{F}_{k-1}] \geq 0$, when there is no change point at time $t_k$.}
% \edit{}

\edit{We aim to create a heuristic that leverages the conditional expectation of $Y_k$ given $Y_{k-1}$ to detect the occurrence of a change between time $t_{k-1}$ and $t_k$.}
% We aim to create a versatile framework for identifying change points and adjusting hyper-parameters that is effective for different online estimators and under multiple noise conditions.
% Based on the analysis of incoming prediction errors, we model the stochastic process $\{Y_k\}$ with an 
\edit{To achieve this goal, we employ the}
exponentially weighted moving average (EWMA)~\cite{hunter1986exponentially}: % \edit{as the heuristic}:
\begin{equation}\label{eq:EWMA}
    Z_k = \eta Y_k + (1 - \eta) Z_{k-1},~Z_0 \coloneqq Y_1,
\end{equation}
where $\eta \in [0, 1]$ is an exponential weighting term.
We can use the $Z_{k-\edit{1}}$ as a predictor of \edit{the expectation of $Y_k$}.
Therefore, the estimator of $Y_k$ is computed as:
$$
    \hat{Y}_k = Z_{k-\edit{1}}.
$$
The EWMA estimator effectively acts as a low pass filter on $\mathbf{Y} \coloneqq \{Y_k\}$. 
If $Y_k$ deviates too much from $\hat{Y}_k$, it is a good sign of a change point occurrence.
The following example demonstrates the interplay between a time-varying system, model predictability, and the EWMA estimator.
\begin{example}\label{ex:sir_cp}
    Consider a 2-node susceptible-infected-recovered~(SIR) system:
    \begin{small}
        \begin{equation}\label{eq:sir_cp}
            \begin{aligned}
                \dot I(t) &= \text{diag}(1 - I(t) - R(t))B(t)I(t) - \text{diag}(\gamma(t))I(t)\\
                \dot R(t) &= \text{diag}(\gamma(t))I(t),
            \end{aligned}
        \end{equation}
    \end{small}
    where $I \in [0, 1]^2$ is the infected proportion, $R \in [0, 1]^2$ is the recovered proportion, $B \in\mathbb{R}_{\geq 0}^2\times \mathbb{R}^2_{\geq 0}$ is the infection rate matrix, and $\gamma \in\mathbb{R}_{\geq 0}^2$ is the vector of recovery rates.
    The system parameters $B$ and $\gamma$ are time-varying but remain piece-wise constant, emulating the lockdown effects and rebound in social activity during the reopening period.
    \begin{figure}%[h]
        \centering
        \includegraphics[width=\columnwidth]{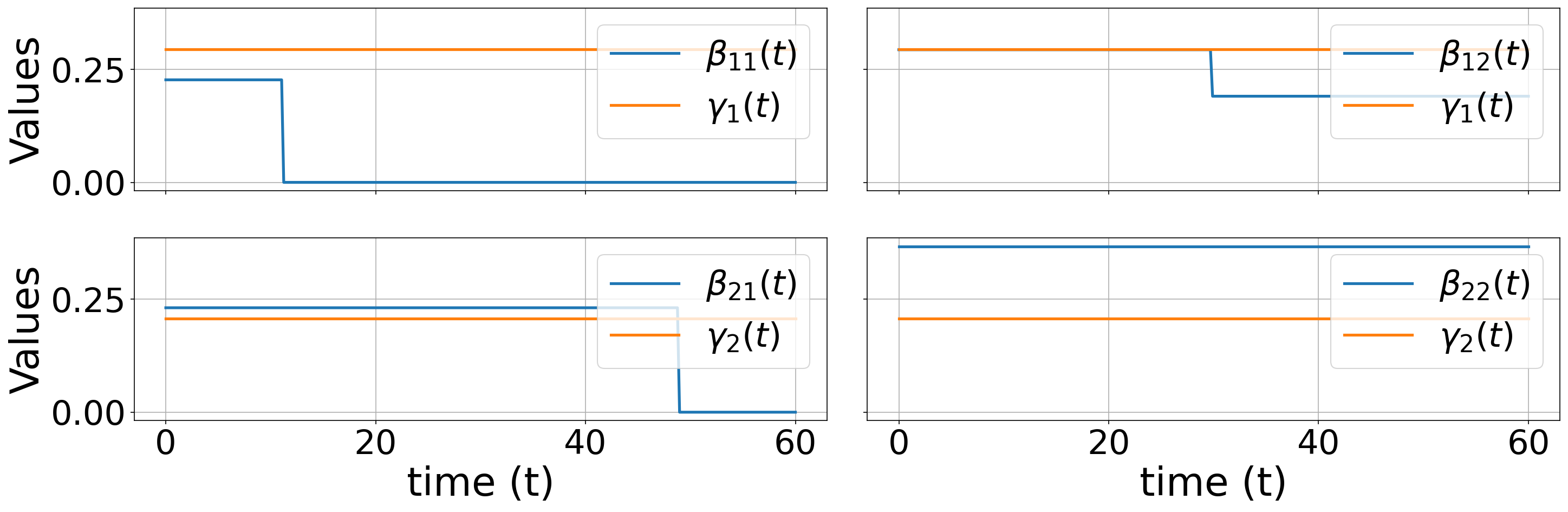}
        \caption{Time-varying parameters of a 2-node SIR network}
        \label{fig:2node_sir}
    \end{figure}
    Fig.~\ref{fig:2node_sir} visualizes the time-varying parameters.
    Let $R, I$, and their derivatives be observable; we can fit the data through an EF-RLS algorithm~\cite{islam2019recursive}.
    A visualization of the evolution of ${Y_k}$ in conjunction with the EWMA estimator $Z_{k-1}$ is shown in Fig.~\ref{fig:predictability_EWMA}.
    \begin{figure}
        \centering
        \includegraphics[width=\columnwidth]{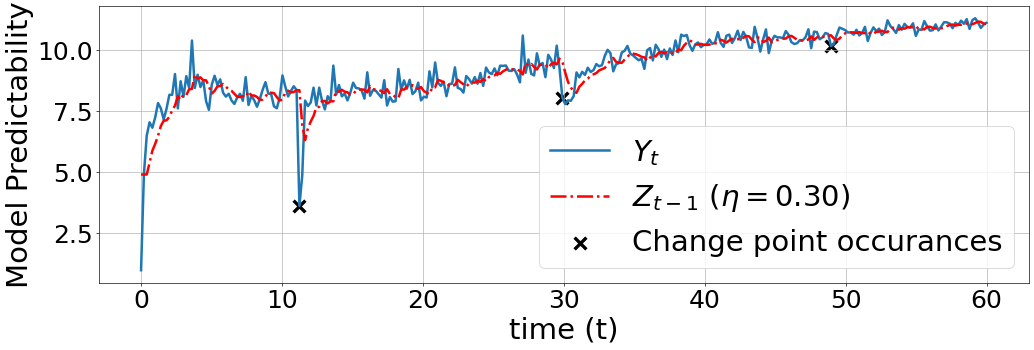}
        \caption{Evolving model predictability of a \edit{GW}-RLS and its EWMA ($\edit{\eta}=0.5$) estimation.}
        \label{fig:predictability_EWMA}
    \end{figure}
    % From Fig.~\ref{fig:predictability_EWMA}, 
    \edit{It is evident that each change point occurrence is associated with a downward jump of $Y_t$, and the magnitude of the jump depends on the amount of change in $\theta_t$ and how that change is reflected in the observed states.
    }
    % it is evident that a significant upward trend in $Y_k$ occurs during the initial training phase until reaching the first change point, marked by the black crosses. 
    % Subsequently, $Y_k$ persists at a lower value due to the effect of outdated data points in the inferred covariance matrix, stemming from previous $\theta(t_k)$ values.
\end{example}

\subsection{Change Point Detection}
In this subsection, we design a simple recursive likelihood ratio test to detect change point occurrences.
% A draft of the 
We construct two hypotheses 
% can be constructed as follows 
to test for change points in $\theta(t)$:
\begin{itemize}
    \item[] $\bar{H}_0$: 
    The stochastic process $\{Y_i\}_{k_1:k_2}$ is ``captured'' by a single EWMA.
    \item[] $\bar{H}_1$: 
    The stochastic process $\{Y_i\}_{k_1:k_2}$ is ``captured'' by two EWMAs, residing in two intervals $[k_1,k_c]$ and $[k_c+1,k_2]$,
\end{itemize}
where $k_1 \leq k_c \leq k_2$.
In the context of our change point detection problem, it is advantageous to set $k_c=k_2-1$ because we want to prevent the most recent abnormal observation from impacting the upstream estimation process and EWMA value.
% Furthermore, we pick the lag value $m=1$, and write $\hat{Y}_k \coloneqq \hat{Y}_k(1) = Z_{k-1}.$ 

To clarify the meaning of ``captured,'' we need to establish a criterion that defines the cost of $\hat{Y}_k$ failing to predict $Y_k$ in the context of the detection problem.
Let $\mathcal{S}_{k_1:k_2} \coloneqq \{k_1, \dots, k_2\}$ be the index set of all residuals.
The set $\mathcal{S}_{k_1:k_2}$ is then partitioned into two residual index sets: the positive, $\mathcal{S}_{k_1:k_2}^+ \coloneqq \{k:\hat{Y}_k - Y_k > 0, k_1\leq k\leq k_2\}$, and non-positive, $\mathcal{S}_{k_1:k_2}^- = \{k:\hat{Y}_k - Y_k \leq 0, k_1\leq k\leq k_2\}$.
After fixing an exponential weight $\eta$, the cost of $\hat{Y}_k$ failing to predict is defined below.
\begin{definition}
    The mean positive squared error~(MPSE) is defined as the conditional expectation of $\hat{Y}_k - Y_k$ on the set $\mathcal{S}^+$:
    \begin{equation*}
        \text{MPSE}(\mathbf{Y}) = \mathbb{E}[(\hat{Y}_k - Y_k)^2 | k \in \mathcal{S}^+].
    \end{equation*}
\end{definition}
When there is an abrupt change, it results in $Y_k < \hat Y_k$ and an increase in $\|e_k\|^2$, consistent with the objective of our task.
However, note that the distribution of $\hat{Y}_k - Y_k$ after the partition can no longer assumed to be Gaussian, and a generalized error distribution should be considered when computing the likelihood.
To simplify \edit{the} computation of the likelihood, we make the following assumption on the distribution of $\hat{Y}_k - Y_k$.
\begin{assumption}\label{asm:exp_dist_res}
    Let $E_k \coloneqq (\hat{Y}_k - Y_k)^2$, where $k \in \mathcal{S}^+$, be 
    % independent and identically distributed~(i.i.d.) exponential random variables
    \edit{an independent and identically distributed~(i.i.d.) exponential random variable}, that is, $E_k|\mathcal{S}^+ \sim \text{Exp}(\lambda)$.
\end{assumption}
\noindent 
\edit{
Though seemingly restrictive, the $E_k$ assumption provides a simple heuristic to detect change points and reset the excitation set in Algorithm~\ref{alg:g_rls}.
}
% Fig.~\ref{fig:res_histograms} visualizes the distribution of residuals from $\mathcal{S}^+$ under Example~\ref{ex:sir_cp}.
% \begin{figure}[h]
%     \centering
%         \centering
%         \includegraphics[width=.5\columnwidth]{}
%     \caption{
%         Histogram of the squared positive residuals set, 
%         $\mathcal{S}^+ = \{k:\hat{Y}_k - Y_k > 0, 0\leq k\leq 300\}$.
%     }
%     \label{fig:res_histograms}
% \end{figure}
With a slight abuse of notation, we define 
$\text{MPSE}(\{E_{i}\}_{k_1:k_2}) \coloneqq \mathbb{E}[E_k | k \in \mathcal{S}_{k_1:k_2}^+].$
Then, under Assumption~\ref{asm:exp_dist_res}, the null and alternative hypotheses for the likelihood ratio test can be constructed as follows:
% \begin{small}
\begin{itemize}
    \item[] $H_0$: \(E_i|\mathcal{S}^+ \sim \text{Exp}(\lambda_0)~\forall k_1\leq i\leq k_2\)
    \item[] $H_1$: \(E_i|\mathcal{S}^+ \sim \text{Exp}(\lambda_1)~\forall k_1\leq i < k_2\)\\ 
    and $E_{k_2}|\mathcal{S}^+ \sim \text{Exp}(\lambda_2)$
\end{itemize}
where $\lambda_0^{-1} = \text{MPSE}(\{E_i\}_{k_1:k_2})$ and $\lambda_1^{-1} = \text{MPSE}(\{E_i\}_{k_1:k_2-1})$ and are estimated by computing the sample mean of $E_k|\mathcal{S}^+_{k_1:k_2}$ and $E_k|\mathcal{S}^+_{k_1:k_2-1}$, respectively.
However, note that since $\mathcal{S}^+_{k_2:k_2}$ is empty, the statement ``$E_{k_2}|\mathcal{S}^+ \sim \text{Exp}(\lambda_2)$,'' is vacuously true regardless of the value of $\lambda_2$.
Therefore, the alternative hypothesis can be rewritten as $H_1$: \(E_i|\mathcal{S}^+ \sim \text{Exp}(\lambda_1)~\forall k_1\leq i\leq k_2-1\).

\begin{definition}[Likelihood Ratio Test Statistic]\label{def:lrt_statistic}
Let \( \mathbf{E} = \{E_1, E_2, \ldots, E_n\} \) be random samples from a probability distribution with parameters \(\Lambda\). 
Suppose \(\Lambda_0 = \edit{\{\lambda_0\}}\) is the set of the parameters under the null hypothesis, and let \(\Lambda_1 = \{\lambda_1^1, \edit{\lambda_1^{2}}\}\) be the parameters set under the alternative hypothesis. 
The test statistic \(D(\textbf{E})\) is defined as:
\begin{equation}\label{eq:lrt_stat}
    D(\textbf{E}) = -2 \ln \left( \frac{L(\textbf{E}|\Lambda_0)}{L(\textbf{E}|\Lambda_1)} \right)
\end{equation}
where \( L(\textbf{E}|\Lambda) \) is the likelihood function.
\end{definition}

In the context of our problem, the null hypothesis \(H_0\) posits that the sequence $\{Y_i\}_{k_1:k_2}$ is represented by only one EWMA, leading to a single exponential distribution $\exp(\lambda_0)$.
Similarly, \edit{$\Lambda_1 = \{\lambda_1^1, \lambda_1^{2}\}$} as a result of the alternative hypothesis \(H_2\) that posits that $\{E_k\}_{k_1:k_2}$ is generated by two \edit{different} exponential distributions.
Then, with the choice of the test statistic $D$ induced from Assumption~\ref{asm:exp_dist_res}, we can recursively perform the likelihood ratio test (LRT) with threshold value $\tau$ \edit{as elaborated in Algorithm~\ref{alg:rec_lrt_cp_dtect}}.
% leveraging Algorithm~\ref{alg:rec_lrt_cp_dtect}.
\begin{algorithm}
    \caption{Recursive LRT for change point detection~(rLRT)}\label{alg:rec_lrt_cp_dtect}
    \begin{algorithmic}[1]
        \Initialize{
            Error threshold: $\tau \in [0, 1]$\\
            Exponential weighting factor: $\eta \in [0, 1]$\\
            Initial EWMA: $Z_0 \gets Y_1$\\
            Initial residual mean: $\lambda_0 \gets 0$\\
            Number of downward drifts: $n \gets 0$
        }
        \vspace{1mm}
        \Inputs{
            Input datum: $Y_{k}$, Data count: $k$
        }
        \If{$k \geq 1$}
        \If{$Z_{k-1} - Y_k > 0$}
            \State $E \gets (Z_{k-1} - Y_k)^2$
            \State $\lambda \gets (n+1)/(\frac{n}{\lambda_n}+E)$
            \State $D \gets 2\left(n\ln(\lambda_n) - (n+1)\ln(\lambda) + 1\right)$
            \State $p \gets 1-\text{CDF}_{\chi^2}(D, 1)$
            \If{$p \leq \tau$}
                \Return True
            \EndIf
            \State $n \gets n + 1$
            \State $\lambda_n \gets \lambda$
        \EndIf
        \State $Z_k \gets \eta Y_k + (1 - \eta) Z_{k-1}$
        \State \textbf{Return} False
        \EndIf
    \end{algorithmic}
\end{algorithm}

In Algorithm~\ref{alg:rec_lrt_cp_dtect}, 
$D = 2\left(n\ln(\lambda_n) - (n+1)\ln(\lambda) + 1\right)$ 
is the aforementioned test statistic.
The properties of the hypothesis test follow naturally from the Wilks' Theorem~\cite{wilks1938large} and Neyman-Pearson Lemma~\cite{hogg1977probability}, reproduced below for completeness.
\begin{lemma}[Wilks' Theorem~{{\cite{wilks1938large}}}]\label{lem:wilks}
Under the null hypothesis \(H_0\), the likelihood ratio test statistic \(D(\textbf{E})\), as defined in \eqref{eq:lrt_stat}
converges to the $\chi^2$ distribution asymptotically in distribution as the number of random samples goes to infinity, that is:
\begin{equation}
    D(\textbf{E}) \overset{\text{dist}}{\to} \chi^2(d),
\end{equation}
where the degree of freedom $d = 1$. 
\end{lemma}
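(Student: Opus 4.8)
The plan is to follow the classical asymptotic argument based on a second-order Taylor expansion of the log-likelihood about the maximum likelihood estimate, combined with the asymptotic normality of the MLE. Write the log-likelihood as $\ell(\Lambda) \coloneqq \sum_{i=1}^n \log f(E_i \mid \Lambda)$, and let $\hat{\Lambda}$ denote the unconstrained maximizer over the full $d_{\text{alt}}$-dimensional parameter space and $\hat{\Lambda}_0$ the constrained maximizer over the $d_{\text{null}}$-dimensional null subspace. Then the test statistic in \eqref{eq:lrt_stat} can be rewritten as $D(\mathbf{E}) = 2\big(\ell(\hat{\Lambda}) - \ell(\hat{\Lambda}_0)\big)$. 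Under $H_0$ the true parameter $\Lambda^*$ lies in the null subspace, and both estimators are consistent for $\Lambda^*$.

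First I would establish the two standard limiting objects under the usual regularity conditions (smoothness of $f$ in $\Lambda$, an interior true parameter, and a nonsingular Fisher information). By the law of large numbers, the normalized observed information $-\tfrac{1}{n}\nabla^2\ell(\Lambda^*)$ converges to the Fisher information matrix $I(\Lambda^*)$, and by the central limit theorem the normalized score $\tfrac{1}{\sqrt{n}}\nabla\ell(\Lambda^*)$ converges in distribution to a centered Gaussian $\mathcal{N}(0, I(\Lambda^*))$. Consistency together with the stationarity condition $\nabla\ell(\hat{\Lambda}) = 0$ then yields the familiar result $\sqrt{n}(\hat{\Lambda} - \Lambda^*) \overset{\text{dist}}{\to} \mathcal{N}(0, I(\Lambda^*)^{-1})$, and analogously for the constrained estimator restricted to the null subspace.

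Next I would Taylor-expand $\ell$ to second order about each maximizer and evaluate at $\Lambda^*$; because the first-order term vanishes at a maximizer, each maximized log-likelihood increment $\ell(\hat{\Lambda}) - \ell(\Lambda^*)$ reduces, up to $o_P(1)$ remainders, to a quadratic form $\tfrac{1}{2} U^\top I(\Lambda^*)^{-1} U$ in the limiting score $U \sim \mathcal{N}(0, I(\Lambda^*))$, with the full-space form ranging over all $d_{\text{alt}}$ coordinates and the constrained form over only the $d_{\text{null}}$ coordinates of the null subspace. Subtracting, $D(\mathbf{E})$ becomes the difference of these two quadratic forms. After whitening by $I(\Lambda^*)^{1/2}$, this difference equals $W^\top (I_{d_{\text{alt}}} - \Pi) W$ for a standard Gaussian $W$ and an orthogonal projector $\Pi$ onto the image of the null subspace, whose complementary projector $I_{d_{\text{alt}}} - \Pi$ is idempotent of rank $d_{\text{alt}} - d_{\text{null}}$. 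Since a quadratic form of a standard Gaussian through an idempotent matrix of rank $d$ is exactly $\chi^2(d)$, the continuous mapping theorem gives $D(\mathbf{E}) \overset{\text{dist}}{\to} \chi^2(d)$ with $d = d_{\text{alt}} - d_{\text{null}}$.

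The main obstacle is the rigorous control of the asymptotic expansions — in particular verifying the regularity conditions that guarantee consistency and asymptotic normality of both the constrained and unconstrained MLEs, and justifying that the Taylor remainders are uniformly $o_P(1)$ so that the limiting law is governed purely by the leading quadratic forms. The concluding algebraic identity, that subtracting the constrained from the unconstrained quadratic form yields a projector of rank exactly equal to the dimension difference, is the crux that pins down the degrees of freedom; everything else is bookkeeping with Slutsky's theorem and the continuous mapping theorem.
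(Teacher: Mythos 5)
Your proposal is correct in outline, but note that the paper does not actually prove this lemma at all: it is quoted verbatim from the literature (Wilks, 1938) purely ``for completeness,'' and the paper's own contribution (Theorem~3) only \emph{invokes} it after verifying that the recursive statistic $D(\mathbf{E})$ on line~6 of Algorithm~2 has the likelihood-ratio form of Definition~5 with $d_{\text{alt}} - d_{\text{null}} = 2 - 1 = 1$. What you have written is the standard classical proof of Wilks' theorem: expressing $D$ as twice the gap between unconstrained and constrained maximized log-likelihoods, using consistency and asymptotic normality of both MLEs, reducing each increment to a quadratic form in the limiting score via second-order Taylor expansion, and identifying the difference (after whitening by $I(\Lambda^*)^{1/2}$) as $W^\top(I - \Pi)W$ with $\Pi$ an orthogonal projector of corank $d_{\text{alt}} - d_{\text{null}}$, whence the $\chi^2(d)$ limit. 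That argument is sound, and the regularity conditions you flag (interior true parameter, nonsingular Fisher information, smooth null submanifold, uniform control of remainders) are exactly what a rigorous treatment must supply; they do hold for the exponential family used in the paper's application, where the plugged-in rates $\lambda_n$, $\lambda_{n+1}$ are precisely the constrained and unconstrained MLEs, so the paper's $D$ genuinely matches the sup-likelihood-ratio framework your proof requires. In short: your route supplies the proof the paper deliberately omits by citation; what the citation buys the paper is brevity, and what your sketch buys is self-containedness, at the cost of needing the full asymptotic machinery (law of large numbers for the observed information, CLT for the score, Slutsky, and the projector rank identity) to be carried out rigorously.
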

\edit{Note that the degree of freedom $d$ is the difference in the number of parameters between set $\Lambda_1$ and $\Lambda_0$.}
\begin{lemma}[Neyman-Pearson Lemma~{{\cite[Thm.~8.6-1]{hogg1977probability}}}]\label{lem:neyman}
Suppose there exists a positive constant $c$ and a subset $\mathcal{C}$ of the sample space such that:
\begin{enumerate}
    \item \(\text{Pr}[\textbf{E}\in \mathcal{C}; \Lambda_0] = \tau\)
    \item \(\frac{L(\textbf{E}|\Lambda_0)}{L(\textbf{E}|\Lambda_1)} \leq c~\text{ if } \textbf{E}\in \mathcal{C}\) and 
    \item \(\frac{L(\textbf{E}|\Lambda_0)}{L(\textbf{E}|\Lambda_1)} > c~\text{ if } \textbf{E}\not\in \mathcal{C}\),
\end{enumerate}
then $\mathcal{C}$ is the best critical region of size $\tau$ for testing the simple null hypothesis $H_0:\Lambda = \Lambda_0$ against the simple alternative hypothesis $H_1:\Lambda = \Lambda_1$.
\end{lemma}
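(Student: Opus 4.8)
The plan is to reconstruct the classical direct argument showing that the likelihood-ratio region is at least as powerful as any competitor of the same size. Let $\mathcal{C}$ be the region satisfying conditions 1)--3), and let $\mathcal{A}$ be an arbitrary critical region with $\text{Pr}[\textbf{E}\in\mathcal{A};\Lambda_0]=\tau$. Writing the power of a region as $\beta(\mathcal{R})\coloneqq \int_{\mathcal{R}} L(\textbf{E}|\Lambda_1)\,d\textbf{E}$, the claim that $\mathcal{C}$ is the best critical region of size $\tau$ is exactly the assertion $\beta(\mathcal{C})\geq\beta(\mathcal{A})$, so this is the single inequality I would target.

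First I would cancel the overlap. Since $\mathcal{C}=(\mathcal{C}\cap\mathcal{A})\cup(\mathcal{C}\cap\mathcal{A}^c)$ and likewise for $\mathcal{A}$, the contribution from $\mathcal{C}\cap\mathcal{A}$ appears in both powers and drops out, giving $\beta(\mathcal{C})-\beta(\mathcal{A})=\int_{\mathcal{C}\cap\mathcal{A}^c}L(\textbf{E}|\Lambda_1)\,d\textbf{E}-\int_{\mathcal{A}\cap\mathcal{C}^c}L(\textbf{E}|\Lambda_1)\,d\textbf{E}$. The key step is the sign analysis forced by conditions 2) and 3): on $\mathcal{C}$ we have $L(\textbf{E}|\Lambda_1)\geq c^{-1}L(\textbf{E}|\Lambda_0)$, which lower-bounds the first integral, whereas on $\mathcal{C}^c$ we have $L(\textbf{E}|\Lambda_1)< c^{-1}L(\textbf{E}|\Lambda_0)$, which upper-bounds the second. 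Combining these (using $c>0$) yields $\beta(\mathcal{C})-\beta(\mathcal{A})\geq c^{-1}\bigl(\int_{\mathcal{C}\cap\mathcal{A}^c}L(\textbf{E}|\Lambda_0)\,d\textbf{E}-\int_{\mathcal{A}\cap\mathcal{C}^c}L(\textbf{E}|\Lambda_0)\,d\textbf{E}\bigr)$.

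Finally I would close the argument with the size constraint. Re-inserting the common piece $\mathcal{C}\cap\mathcal{A}$ into both terms, the bracketed quantity equals $\int_{\mathcal{C}}L(\textbf{E}|\Lambda_0)\,d\textbf{E}-\int_{\mathcal{A}}L(\textbf{E}|\Lambda_0)\,d\textbf{E}=\tau-\tau=0$ by condition 1) and the assumed size of $\mathcal{A}$. Hence $\beta(\mathcal{C})-\beta(\mathcal{A})\geq 0$, establishing optimality of $\mathcal{C}$.

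The main obstacle is not any deep difficulty but the bookkeeping that makes the cross terms cancel exactly: one must ensure $\mathcal{A}$ has size exactly $\tau$ (or invoke $\leq\tau$ together with $c>0$ to keep the inequality in the right direction), and the whole derivation must be phrased so it covers the discrete likelihoods actually used in Algorithm~\ref{alg:rec_lrt_cp_dtect}, where $\textbf{E}$ is the exponential sample and each $\int$ becomes a sum. Since this is a textbook result cited to \cite{hogg1977probability}, I expect the cleanest route to be to note that the identity above holds verbatim with summation replacing integration, and to remark that boundary ties on $\{L(\textbf{E}|\Lambda_0)=cL(\textbf{E}|\Lambda_1)\}$ may be assigned to either region without altering the size or the power, so no randomization is needed for the simple-versus-simple setting considered here.
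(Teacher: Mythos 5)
The paper itself does not prove this lemma: it is stated as a known result, reproduced verbatim from the cited reference (Hogg), so there is no in-paper argument to compare against. Your reconstruction is the classical Neyman--Pearson proof and it is correct: the decomposition $\beta(\mathcal{C})-\beta(\mathcal{A})=\int_{\mathcal{C}\cap\mathcal{A}^c}L(\textbf{E}|\Lambda_1)\,d\textbf{E}-\int_{\mathcal{A}\cap\mathcal{C}^c}L(\textbf{E}|\Lambda_1)\,d\textbf{E}$, the bounds $L(\textbf{E}|\Lambda_1)\geq c^{-1}L(\textbf{E}|\Lambda_0)$ on $\mathcal{C}$ and $L(\textbf{E}|\Lambda_1)< c^{-1}L(\textbf{E}|\Lambda_0)$ on $\mathcal{C}^c$, and the cancellation via the size constraint are exactly the textbook argument the citation points to, and your remark that size $\leq\tau$ competitors are handled the same way (using $c>0$) is also right. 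One minor correction to your closing comment: in this paper the samples $E_k$ are exponentially distributed, hence continuous, so the integral formulation with densities is the natural one here; the sum version is not needed for the setting of Algorithm~\ref{alg:rec_lrt_cp_dtect}, though it is true that the identity carries over verbatim in the discrete case.
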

In the context of our problem, the best critical region is a detection region of $D(\mathbf{E})$ that maximizes the probability of rejecting the null hypothesis when the alternative hypothesis is true for a given threshold value~(significan\edit{ce} level) $\tau$.
With the key previous results stated, we are prepared to prove the following theorem \edit{about} the properties of Algorithm~\ref{alg:rec_lrt_cp_dtect}.
\begin{theorem}\label{thm:optimal_hypo_test}
    Under Assumption~\ref{asm:exp_dist_res}, the hypothesis test $p \leq \tau$, has the following properties:
    \begin{enumerate}
        \item The test statistic:
        $$D(\mathbf{E}) = 2\left(n\ln(\lambda_n) - (n+1)\ln(\lambda) + 1\right)$$ 
        asymptotically converges to the chi-squared distribution $\chi^2(d)$ in distribution with $d=1$.
        \item The likelihood ratio test 
        $p \leq \tau$
        maximizes the true positive rate (TPR) across
        all hypothesis tests of the same significan\edit{ce} level $\tau$.
    \end{enumerate}
\end{theorem}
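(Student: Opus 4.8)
The plan is to establish the two claims separately: claim~(1) is an instantiation of Wilks' Theorem (Lemma~\ref{lem:wilks}) once the statistic $D$ computed in Algorithm~\ref{alg:rec_lrt_cp_dtect} is identified with the generalized log-likelihood ratio of Definition~\ref{def:lrt_statistic}, and claim~(2) follows from the Neyman--Pearson Lemma (Lemma~\ref{lem:neyman}) applied to the fitted hypotheses. For claim~(1), I would first write the exponential likelihood dictated by Assumption~\ref{asm:exp_dist_res}: for $m$ i.i.d. samples $E_1,\dots,E_m\sim\mathrm{Exp}(\lambda)$ the log-likelihood $\ell(\lambda)=m\ln\lambda-\lambda\sum_i E_i$ is maximized at the reciprocal sample mean $\hat\lambda=m/\sum_i E_i$, with $\ell(\hat\lambda)=m\ln\hat\lambda-m$. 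Under $H_0$ all $n+1$ positive-drift residuals are pooled, so the MLE is exactly the quantity the algorithm names $\lambda=(n+1)/(n/\lambda_n+E)$ and $\ln L(\mathbf{E}\mid\Lambda_0)=(n+1)\ln\lambda-(n+1)$; under $H_1$ the terminal residual is absorbed by the new regime and drops out of the likelihood (the factor attached to the empty set $\mathcal{S}^+_{k_2:k_2}$ being vacuous), leaving the first $n$ residuals with MLE $\lambda_n$ and $\ln L(\mathbf{E}\mid\Lambda_1)=n\ln\lambda_n-n$. Substituting into $D(\mathbf{E})=-2\ln\big(L(\mathbf{E}\mid\Lambda_0)/L(\mathbf{E}\mid\Lambda_1)\big)$ and cancelling the linear terms recovers exactly $D=2\big(n\ln\lambda_n-(n+1)\ln\lambda+1\big)$. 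Since the models are nested with $d_{\mathrm{alt}}-d_{\mathrm{null}}=2-1=1$ and the samples are i.i.d. by Assumption~\ref{asm:exp_dist_res}, Wilks' Theorem yields $D\overset{\text{dist}}{\to}\chi^2(1)$, which is claim~(1) and simultaneously justifies scoring the $p$-value against $\mathrm{CDF}_{\chi^2}(\cdot,1)$.

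For claim~(2), I would freeze the fitted rates $\lambda_0=\lambda$ and $\lambda_1=\lambda_n$ so that $H_0$ and $H_1$ become simple hypotheses, and then show that the rejection region $\{p\le\tau\}=\{D\ge D_\tau\}$, where $D_\tau$ is the upper-$\tau$ quantile of $\chi^2(1)$, coincides with a likelihood-ratio threshold region $\{L(\mathbf{E}\mid\Lambda_0)/L(\mathbf{E}\mid\Lambda_1)\le c\}$ for $c=e^{-D_\tau/2}$, using the monotone (decreasing) relation between $D$ and the ratio. By the null distribution from claim~(1) this region has size $\Pr[\mathbf{E}\in\mathcal{C};\Lambda_0]=\tau$, verifying hypothesis~(i) of Lemma~\ref{lem:neyman}, while the monotone correspondence verifies hypotheses~(ii)--(iii). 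The lemma then certifies that $\{p\le\tau\}$ is the best critical region of size $\tau$, i.e.\ it maximizes the probability of rejecting $H_0$ when $H_1$ holds, which is precisely the TPR-optimality asserted in claim~(2).

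I expect the main obstacle to be the rigorous identification of $D$ in claim~(1), specifically the treatment of the terminal residual under $H_1$: fitting that single point with its own free rate would leave a spurious $\ln E_{k_2}$ term, so the clean cancellation to the constant $+1$ relies on interpreting its one-point regime as contributing a trivial unit factor to the likelihood. A secondary point is that both Lemma~\ref{lem:wilks} and Lemma~\ref{lem:neyman} are invoked under i.i.d. sampling and the exactness of the $\chi^2(1)$ calibration is asymptotic, so I would lean explicitly on Assumption~\ref{asm:exp_dist_res} and state claim~(2) for the simple-versus-simple test obtained after substituting the estimated rates, rather than for the fully composite alternative.
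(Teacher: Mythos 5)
Your proposal is correct and follows essentially the same route as the paper's proof: compute the exponential MLEs under the pooled and split hypotheses to identify $D$ with the log-likelihood ratio of Definition~\ref{def:lrt_statistic}, invoke Wilks' Theorem (Lemma~\ref{lem:wilks}) with $d = d_{\text{alt}} - d_{\text{null}} = 1$ for claim~(1), and invoke the Neyman--Pearson Lemma (Lemma~\ref{lem:neyman}) via the monotone correspondence between $\{p \le \tau\}$ and a likelihood-ratio threshold region for claim~(2). Your treatment is in fact slightly more careful on two points the paper glosses over: the vacuous terminal-residual factor under $H_1$, and the sign of the threshold $c = e^{-D_\tau/2}$ (the paper's displayed bound $c = \exp\left(\tfrac{1}{2}\mathrm{CDF}_{\chi^2}^{-1}(1-\tau)\right)$ has a sign slip).
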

\begin{proof}
    Upon a new change point candidate $k_2$, where $E_{k_2}> 0$, 
    let $n = |\mathcal{S}^+_{k_1:k_2-1}|$. The sample mean of \(\mathbf{E} | \mathcal{S}^+_{k_1:k_2-1}\) is computed as:
    \begin{equation}
        \lambda_n = \frac{n}{\sum_{k \in \mathcal{S}^+_{k_1:k_2-1}} E_k},
    \end{equation}
    where $\lambda_{n}$ and $\lambda_{n+1}$ can be expressed in a recursive relationship,
    $
        \lambda_{n+1} = \frac{n+1}{\frac{n}{\lambda_{n}} + E_{k_2}},
    $
    that aligns with the recursive update rule on line~6 \edit{of} Algorithm~\ref{alg:rec_lrt_cp_dtect}.
    The log likelihood $\ln L(\mathbf{E}|\lambda_1)$ is computed through the definition of the probability density function of an exponential distribution:
    \begin{align*}
        \ln L(\mathbf{E}|\Lambda_1) &= \sum_{k \in \mathcal{S}^+_{k_1:k_2-1}} \ln(\lambda_n\exp(-\lambda_n E_k)) \\
        &= \sum_{k \in \mathcal{S}^+_{k_1:k_2-1}} (\ln \lambda_n - \lambda_n E_k)\\
        &= n\ln(\lambda_n) - \lambda_n \sum_{k \in \mathcal{S}^+_{k_1:k_2-1}}E_k\\
        &= n\ln\lambda_n -n.
    \end{align*}
    The likelihood of $H_0$, $L(\mathbf{E}|\Lambda_0)$, is computed similarly \edit{by} replacing 
    $n$ \edit{with} $n+1$. 
    Let $\lambda \coloneqq \lambda_{n+1}$ to simplify notation.
    Then, the test statistic $D(\mathbf{E})$ can be evaluated as:
    \begin{align*}
        D(\mathbf{E}) &= 2\left(n\ln(\lambda_n) - (n+1)\ln(\lambda) + 1\right)\\
        &= -2((n+1)\ln\lambda -(n+1) - (n\ln\lambda_n -n))\\
        &= -2(\ln(L(\mathbf{E}|\Lambda_0)) -\ln(L(\mathbf{E}|\Lambda_1)))\\
        &= -2\left(\ln\frac{L(\mathbf{E}|\Lambda_0)}{L(\mathbf{E}|\Lambda_1)}\right),
    \end{align*}
    which is in the form of likelihood ratio test statistic in Definition~\ref{def:lrt_statistic}.
    By Lemma~\ref{lem:wilks}, the test statistic $D(\mathbf{E}) \overset{\text{dist}}{\to} \chi^2(d)$, where
    the degree of freedom $d = d_{\text{alt}} - d_{\text{null}} = 1$ by Definition~\ref{def:lrt_statistic}.
    Moreover, the likelihood ratio test:
    \begin{equation}\label{eq:lrt_ineq}
        1 - \text{CDF}_{\chi^2}(D(\mathbf{E})) \leq \tau
    \end{equation}
    induces a critical region $\mathcal{C} = \{\mathbf{E} \in \Omega : \text{reject } H_0 \text{ if } \mathbf{E}\in \mathcal{C}\}$ of size $\tau$, \edit{where $\Omega$ is the sample space of $\mathbf{E}$}.
    The sample $E_k \in \mathcal{C}$ if and only if \eqref{eq:lrt_ineq} is satisfied.
    Since $\text{CDF}_{\chi^2}(\cdot)$ and $\ln(\cdot)$ are monotonic functions, we can rewrite \eqref{eq:lrt_ineq}:
    \begin{align*}
        &~1 - \text{CDF}_{\chi^2}(D(\mathbf{E})) \leq \tau\\
        \Rightarrow& D(\mathbf{E}) \geq \text{CDF}_{\chi^2}^{-1}(1 - \tau)\\
        \Rightarrow& \ln\left(\frac{L(\mathbf{E}|\Lambda_0)}{L(\mathbf{E}|\Lambda_1)}\right) \leq \frac{1}{2}\text{CDF}_{\chi^2}^{-1}(1 - \tau)\\
        \Rightarrow& \frac{L(\mathbf{E}|\Lambda_0)}{L(\mathbf{E}|\Lambda_1)} \leq \exp\left(\frac{1}{2}\text{CDF}_{\chi^2}^{-1}(1 - \tau)\right) \eqqcolon c.
    \end{align*}
    The critical region $\mathcal{C}$ is the best among all hypothesis tests through Lemma~\ref{lem:neyman}.
    % Therefore, $1 - \text{CDF}_{\chi^2}(D(\mathbf{E})) \leq \tau$ is uniformly most powerful in this sense.
\end{proof}
Algorithm~\ref{alg:rec_lrt_cp_dtect} maximizes the true positive rate 
% {\color{red} TPR}
, a.k.a. power, at a given significance level $\tau$. 
However, we have chosen the simple EWMA model and exponentially distributed error distribution to simplify the computation of the test statistic $D(\mathbf{E})$ in the recursive context of our problem. 
Therefore, the fulfillment of Assumption~\ref{asm:exp_dist_res} may vary depending on the complexity of the online parameter estimator and the level of excitation of the upstream regressor signal. 

The change point detection algorithm can be applied to all online estimation algorithms that retain some memory or state during their estimation process. 
By integrating the downstream change point algorithm, we can enhance their performance in a time-varying environment. 

\begin{algorithm}
    \caption{Change Point State Resetting Online Estimator}\label{alg:cps_reset_scheme}
    \begin{algorithmic}[1]
        \Initialize{
            Estimator Hyper-parameters: $\Psi$\\
            Online Estimator State: $\mathcal{S}_0$\\
            Parameter Estimate: $\hat\theta_0$\\
            Online Estimator: $\hat{\theta} \gets \hat{\theta}(\ \cdot\ ,\mathcal{S}_0, \hat\theta_0; \Psi)$
        }
        \Inputs{
            Input datum: $(x_{k}, y_{k})$, Data count: $k$
        }
        \If{$k \geq 1$}
        \State $\hat{\theta}_{k}, \mathcal{S}_k \gets \hat{\theta}((x_{k}, y_{k}), \mathcal{S}_{k-1}, \hat{\theta}_{k-1})$
        \State $Y_k \gets -\log(\|y_k - \phi(x_k)\hat{\theta}_{k-1}\|_2^2)$
        \State \textbf{Run} Algorithm~\ref{alg:rec_lrt_cp_dtect}: detectChange $\gets$ rLRT($Y_k$)
        \If{detectChange}
            \State $\hat{\theta} \gets \hat{\theta}(\ \cdot\ , \mathcal{S}_0, \hat{\theta}_k)$
        \Else
            \State $\hat{\theta} \gets \hat{\theta}(\ \cdot\ , \mathcal{S}_k, \hat{\theta}_k)$
        \EndIf
        \EndIf
    \end{algorithmic}
\end{algorithm}
Presented in Algorithm~\ref{alg:cps_reset_scheme} is an algorithmic framework for incorporating the change point detection algorithm to facilitate online state resetting.
For a exponentially forgetting recursive least squares algorithm, the state is the approximated covariance matrix, $\mathcal{S} = \{P_k\}$.
In the case of GW-RLS, the states include the information of the excitation set $H^{(e)}_k$, the excitation set cross-correlation vector $\nu_k^{(e)}$, the excitation set $\Phi^{(e)}_k$, and the approximated covariance matrix $P_k$.

We may also consider the previously estimated parameter $\hat\theta_k$ as part of the estimator states 
if the system parameters deviate too much from their previous values after abrupt changes.
% In Section~\ref{sec:sim}, 
% \edit{we illustrate the performance of the proposed GW-RLS on a simple SIS system with two parameters, a networked SIR system with $56$ parameters, and the joint performance of Algorithm~\ref{alg:g_rls} to Algorithm~\ref{alg:rec_lrt_cp_dtect} coupled under Algorithm~\ref{alg:cps_reset_scheme}.}
% Furthermore, we show the performance of Algorithm~\ref{alg:rec_lrt_cp_dtect} in a two-node SIR system with three change points and compare the adaptability of EF-RLS and GW-RLS when coupled with Algorithm~\ref{alg:cps_reset_scheme}.}
% we investigate 
% how \edit{GW-RLS} and \edit{the} change point detection algorithm impact the performance of an online estimator in both static and time-varying system identification problems.

\section{Simulations}\label{sec:sim}
\edit{This section is organized into three subsections: 
In Section~\ref{sec:sim:simpleSIS}, we illustrate the performance of the proposed GW-RLS on a simple SIS system with two parameters, similar to Example~\ref{ex:lack_pe_networked_SIS_SIR}.
In Section~\ref{sec:sim:networkedSIR}, we test the performance of Algorithm~\ref{alg:g_rls} on a networked SIR system with $56$ parameters. 
In Section~\ref{sec:sim:timevarying}, the joint performance of Algorithm~\ref{alg:g_rls} and Algorithm~\ref{alg:rec_lrt_cp_dtect} coupled under Algorithm~\ref{alg:cps_reset_scheme} is illustrated using the two-node SIR system with three change points from Example~\ref{ex:sir_cp}.
These numerical experiments aim to illustrate the performance of the proposed algorithms in multiple scenarios.
% with varying difficulties.
% including lack of PE/PI and time-varying paramters.
}

\subsection{SIS Compartmental Model}\label{sec:sim:simpleSIS}
% \edit{This section aims to assess the performance of three algorithms: initially-excited multi-model adaptive identifier (IE-MMAI)~\cite{dhar2022initial}, EF-RLS~\cite{islam2019recursive}, and the newly proposed \edit{GW-RLS} algorithm. 
% We evaluate their effectiveness on the SIS and SIR networked compartmental models. 
% Detailed discussions will be presented to offer insights into the disparities in performance, focusing on the practical identifiability and persistent excitation perspectives.}
\edit{This section compares the performance of GW-RLS with the classical EF-RLS~\cite{islam2019recursive} and a gradient descent-based algorithm: initially excited multi-model adaptive identifier (IE-MMAI)~\cite{dhar2022initial} with insufficient persistent of excitation in the regressor signal.
Detailed discussions are presented below to offer insights into the disparities in performance, focusing on the practical identifiability and persistent excitation perspectives.
}

Consider the \edit{discretized} SIS compartmental dynamics with process noise:
% \begin{equation}\label{eq:single_node_SIS}
%     \diff X(t) = \phi(X(t)) \theta \diff t + \sigma \diff W,
% \end{equation}
\edit{
\begin{align}\label{eq:single_node_SIS}
    X_{k+1} &= X_{k} + h\phi(X_k) + \xi_k \nonumber\\
    \xi_k &= \sigma B_k, \qquad B_k \sim \mathcal{N}(0, h),
\end{align}}
where $X_k$ is the proportion of infected population at time $t_k$, 
the parameter $\theta = [\beta,\ \gamma]^\top$, 
such that $\beta$ is the infection rate, $\gamma$ is the recovery rate, 
the regressor $\phi(X_k) = [X_k(1 - X_k),\ -X_k]$.
% and $W$ is the standard Wiener process.
% 
\edit{Note that \eqref{eq:single_node_SIS} reduces to a 
% deterministic ODE 
Euler discretization of a deterministic ODE
when the standard deviation $\sigma$ or step size $h$ approaches zero.
In particular, we choose $h=0.1$ for the simulation of this subsection.
}

We present parameter estimation results for both the noise-free case and including Gaussian process and observation noise for the SIS dynamics in~\eqref{eq:single_node_SIS}.

\begin{figure}%[h]
    \centering
    \begin{subfigure}[t]{0.49\columnwidth}
            \includegraphics[width=\columnwidth, trim = 0 0cm 0 .1cm, clip]{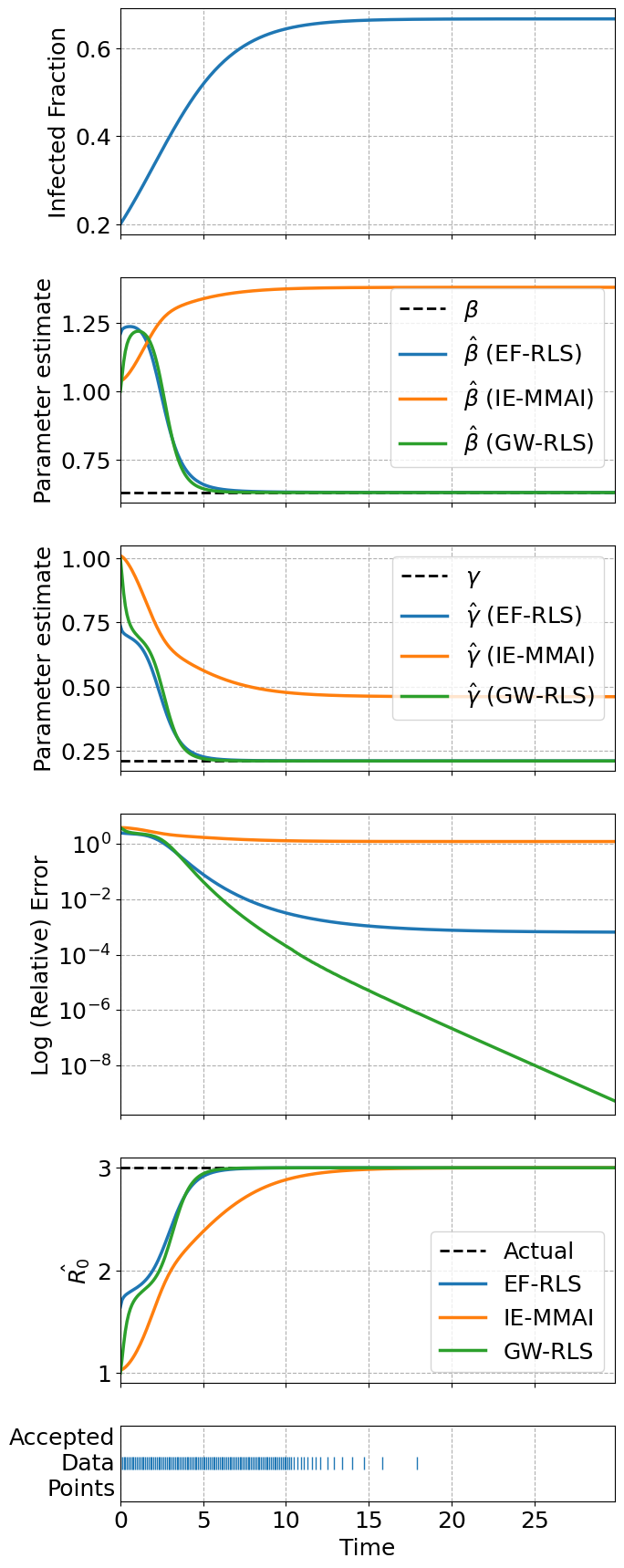}
    \end{subfigure}
    \begin{subfigure}[t]{0.49\columnwidth}
            \includegraphics[width=\columnwidth, trim = 0 0cm 0 .1cm, clip]{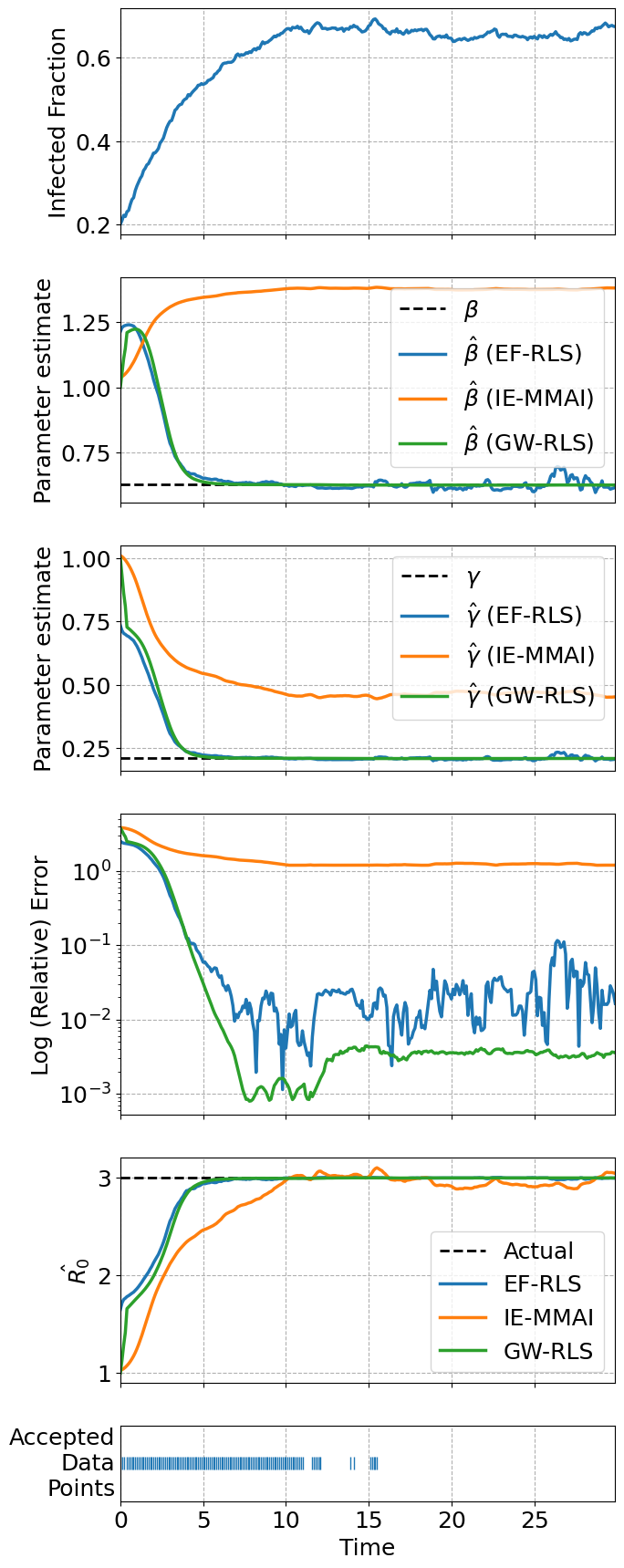}
    \end{subfigure}
    \caption{
        The top panel shows the infected proportion under the SIS model (\(\beta = 0.8076\), \(\gamma = 0.2692\)).
        The bottom panel depicts data indices accepted into the greedy excitation set by \edit{GW-RLS}.
        Then, the rest of the panels, 
        arranged from top to bottom, compare the performance of EF-RLS, IE-MMAI, and \edit{GW-RLS}: 
        (\edit{Second to} Top)~parameter estimates over time for both the noise-free case~(left panel) and a noisy simulation~(right panel); 
        (Middle)~maximum relative error in log scale for parameter estimates over time; 
        (\edit{Second to} Bottom)~reproduction number estimates over time.
    }
    \label{fig:compare_scalar_SIS}
    \vspace{-.3cm}
\end{figure}
In Fig.~\ref{fig:compare_scalar_SIS}, we compare the performance of  EF-RLS and IE-MMAI~\cite{dhar2022initial}, to the performance of the \edit{GW-RLS} Algorithm we propose. 
The same initial estimates of parameters, $\theta_0 = [\beta_0, \gamma_0]^\top = [1, 1]^\top$,
are used for all algorithms with the exception of IE-MMAI, for which the \(m=3\) models were initialized randomly around 
\(\theta_0\).
The forgetting rate of EF-RLS and \edit{GW-RLS} are set to $0.98$.
IE-MMAI, designed for only LTI systems and as a gradient descent-like, first-order method, is justifiably sensitive to the choice of the initial parameter estimates, and often fails to converge due to the poor practical identifiability of SIS models, as discussed in Example~\ref{ex:lack_pe_networked_SIS_SIR}.
Note that the estimated reproduction numbers still converge to the actual value despite the lack of convergence of the parameter estimates themselves, which is consistent with our discussion of SIS practical identifiability in Section \ref{sec:lack-pe-pi}.
On the other hand, even in the presence of noise, the parameter estimates converge for both EF-RLS and \edit{GW-RLS}. 

Data accepted into the \edit{GW-RLS} greedy excitation set (\ref{def:greedyset}) used to construct the main regressor are diagrammatically depicted \edit{are during} the lowermost block of Fig.~\ref{fig:compare_scalar_SIS}. 
A majority of points accepted by the algorithm is in the transient rise of states before the equilibrium is reached. 
The beginning of the epidemic garners a critical amount of information about the epidemic parameters as would be expected.

Note that while EF-RLS is comparable in performance to \edit{GW-RLS} for the noise-free case (left panel of Fig.~\ref{fig:compare_scalar_SIS}), it becomes increasingly oscillatory upon losing excitation in the noisy case (right panel of Fig.~\ref{fig:compare_scalar_SIS}). 
A closer look at the covariance matrix \(P_k\) for both algorithms reveals a steady increase in the condition number and maximum eigenvalue of \(P_k\) for EF-RLS, while those of \edit{GW-RLS} saturate due to its tendency to avoid picking up non-exciting data points (see Fig.~\ref{fig:Pk_RLS}). 
This phenomenon in EF-RLS is observed in both the noise-free and noisy cases and is known in the literature as covariance windup \cite{fortescue1981selftune}, which occurs when a non-unity forgetting factor in EF-RLS causes \(P_k\) to get closer to a singular matrix upon losing persistence of excitation. 
The linear increase in the maximum eigenvalue of \(P_k\) is consistent with past analysis~\cite{cao2001windup}. 
Upon running the parameter estimation task for longer times, the RLS estimates diverge. 

\begin{figure}
    \centering
    \includegraphics[width=0.9\columnwidth]{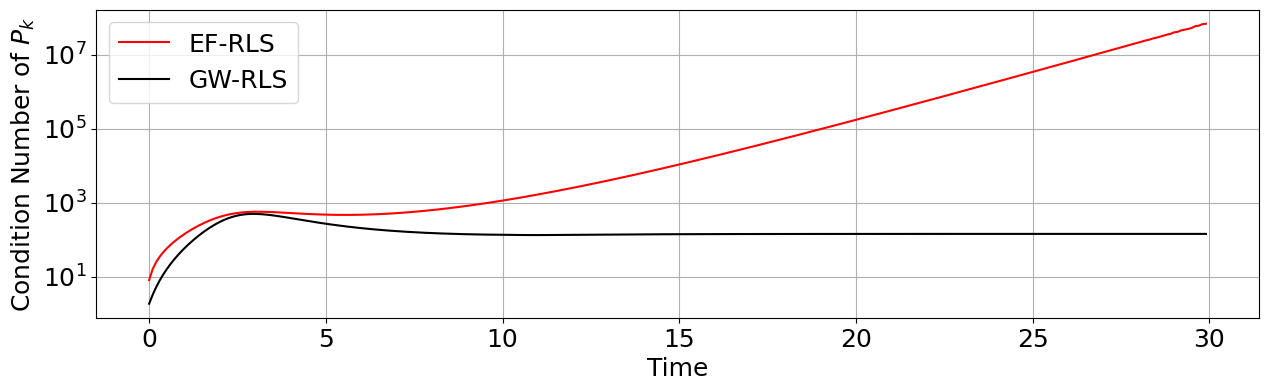}
    \caption{\footnotesize{Plot of the condition number of the covariance matrix \(P_k\) against time for both EF-RLS and \edit{GW-RLS} running on data from the noise-free SIS simulation (Fig.~\ref{fig:compare_scalar_SIS}).}}
    \label{fig:Pk_RLS}
\end{figure}
\subsection{Networked SIR Model}\label{sec:sim:networkedSIR}
In this section, we aim to investigate the performance of the \edit{GW-RLS} algorithm in estimating epidemic parameters under more realistic conditions involving networked structures \edit{with step size $h=0.2$}.
Three network topologies are considered in the simulations, which include a fully-connected, a star, and an Erd\H{o}s-R\'{e}nyi (ER)~\cite{erdds1959random} network.
Fig.~\ref{fig:3graphs} shows the three topologies of networked structures we use in this section.
\begin{figure}%[h]
    \centering
    \begin{subfigure}[h]{.32\columnwidth}
        \centering
        \includegraphics[width=\columnwidth]{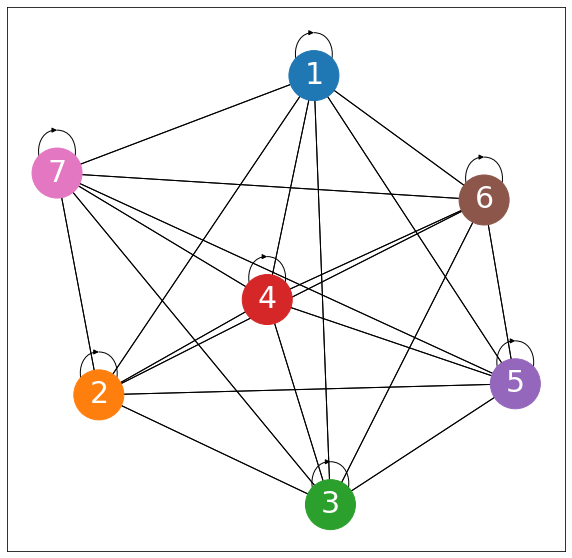}
        \caption{Fully-connected}
        \label{fig:graph:fc}
    \end{subfigure}
    \begin{subfigure}[h]{.32\columnwidth}
        \centering
        \includegraphics[width=\columnwidth]{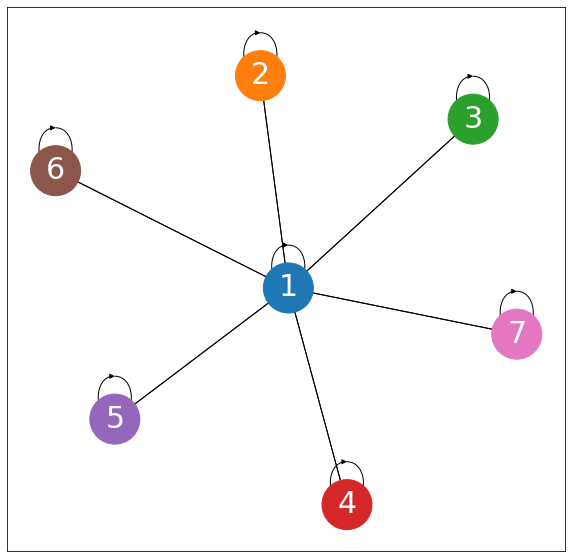}
        \caption{Star}
        \label{fig:graph:star}
    \end{subfigure}
    \begin{subfigure}[h]{.32\columnwidth}
        \centering
        \includegraphics[width=\columnwidth]{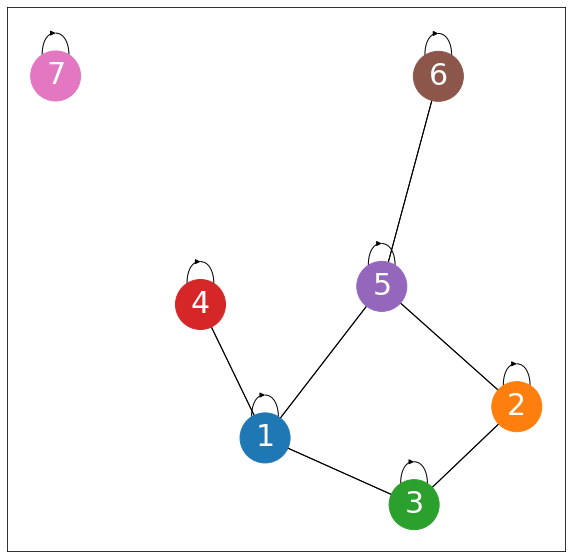}
        \caption{Erd\H{o}s-R\'{e}nyi}
        \label{fig:graph:er}
    \end{subfigure}
    \caption{Fig.~\ref{fig:graph:fc} to \ref{fig:graph:er} illustrate the structure of the 7-node networks, where the Erd\H{o}s-R\'{e}nyi (ER) network in Fig.~\ref{fig:graph:er} is generated with \(0.50\) edge inclusion probability.}
    \label{fig:3graphs}
\end{figure}

To generalize the $2$-node SIR network, 
We can rewrite the $2$-node SIR network presented in~\eqref{eq:sir_cp}.
% in a manner akin to~\eqref{eq:reg_disc_time}.
Let $n$ be the number of nodes in a network, 
we choose the regressor \(\phi_k \in \mathbb{R}^{(2n) \times (n^2+n)}\) as follows:
% \begin{small}
\begin{equation}\label{eq:SIRnetworkphi}
\begin{small}
    \phi_k :=
    \begin{bmatrix}
        I(t_k)^\top \otimes \text{diag}(1_n - I(t_k) - R(t_k)) & - \text{diag}(I(t_k))\\
        0_{n\times n^2} & \text{diag}(I(t_k))
    \end{bmatrix},
\end{small}
\end{equation}
% \end{small}
and the target vector as $\psi_k = \begin{bmatrix}
    \dot I(t_k)^\top & \dot R(t_k)^\top
\end{bmatrix}$,
where 
\(I(t) \in [0, 1]^n\) and \(R(t) \in [0, 1]^n\) represent the infected and recovered proportion of the populations accordingly.
The parameter vector \(\theta = [\beta, \gamma]^\top\) is the concatenated flattened adjacency matrix \(\beta = \text{vec}(B)\) (of infection rates between different nodes, not assumed symmetric) and recovery rates vector \(\gamma\), such that $\theta \in \mathbb{R}^{n^2 + n}_{\geq 0}$ is a (\(n^2 + n\))-dimensional vector.
Fig.~\ref{fig:fc_sir_noiseless} shows the infection levels and their derivatives in the noiseless $7$-node network. 
\begin{figure}%[h]
    \centering
    \includegraphics[width=\columnwidth]{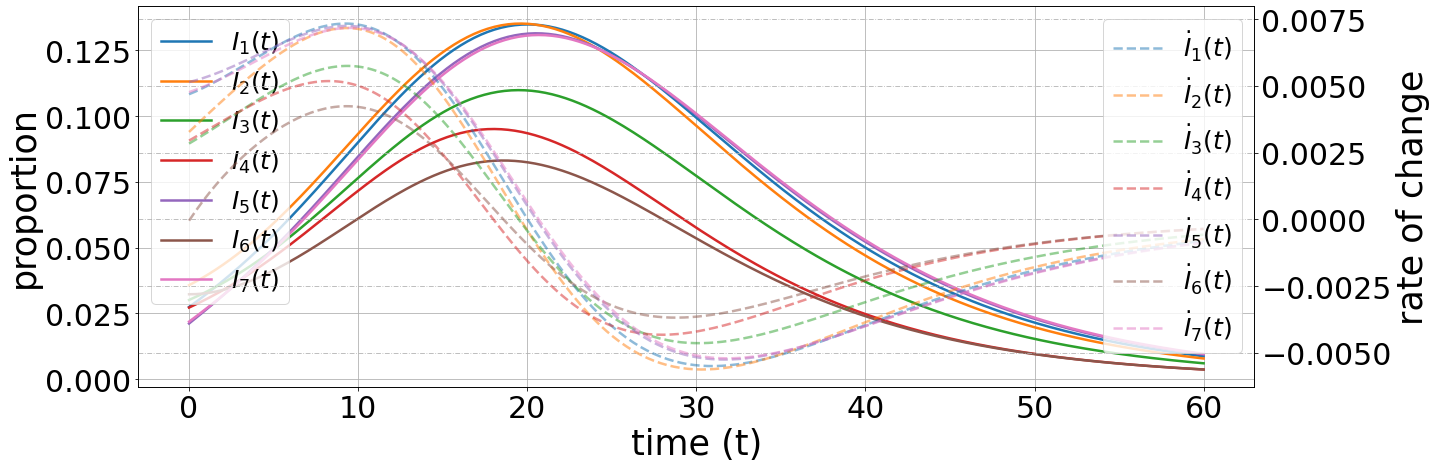}
    \caption{State trajectories of fully-connected networked SIR model without noise.}
    \label{fig:fc_sir_noiseless}
\end{figure}

Before we begin discussing the performance of the system parameters' estimation, it is helpful to introduce two more notions to help summarize the estimation performance over $50$ parameters.
In Example~\ref{ex:lack_pe_networked_SIS_SIR}, we introduced the basic reproduction number $\mathcal{R}_0$ of a single-node SIR/SIS system as the ratio between the infection rate and the recovering rate $\beta/\gamma$. 
The local basic reproduction number generalizes the notion of basic reproduction number to networked epidemics.
\begin{definition}[Local Basic Reproduction Number]\label{def:local_basic_r_0}
    The local basic reproduction number $\mathcal{R}^0_i \in \mathbb{R}_{\geq 0}$~\cite{she2023distributed} of node $i$ in a networked SIR system with parameterization $\theta_{\text{SIR}} = (B, \gamma)$ is:
    \begin{equation}
        \mathcal{R}^0_i \coloneqq \sum_{j=1}^{n} [\text{diag}(\gamma)^{-1} B]_{ij}.
    \end{equation}
\end{definition}
\begin{definition}[Relative Error]
    The relative error is defined as $\tilde{\theta}(t) \coloneqq |\frac{\theta_{\text{true}} - \hat{\theta}(t)}{\theta_{\text{true}} + \delta}|$. 
\end{definition}
We choose $\delta = 10^{-2}$ to be the order of magnitude of the smallest positive parameter in the network.
\begin{definition}[Relative Error Profile]
    Let $p$ be the number of parameters in $\theta$.
    The relative error profile is defined as the area enclosed by the interval $[\min_{i \in [p]}\{\tilde{\theta}_i(t)\}, \max_{i \in [p]}\{\tilde{\theta}_i(t)\}]$ for all $t_0\leq t \leq T$, where $\tilde{\theta} \in \mathbb{R}^{p}$ is a vector with length $p > 1$.
\end{definition}
Since we have compared the performance of EF-RLS and \edit{GW-RLS} with the single-node SIS example, we now proceed to compare the performance of \edit{GW-RLS} and the Directional Forgetting Recursive Least Squares (DF-RLS) algorithm~\cite{cao2000directional}.
These two algorithms share similar working principles, therefore, the comparison is carried out using the noiseless fully-connected SIR network example shown in Fig.~\ref{fig:fc_sir_noiseless}.
\begin{figure}
    \centering
    \includegraphics[width=\columnwidth]{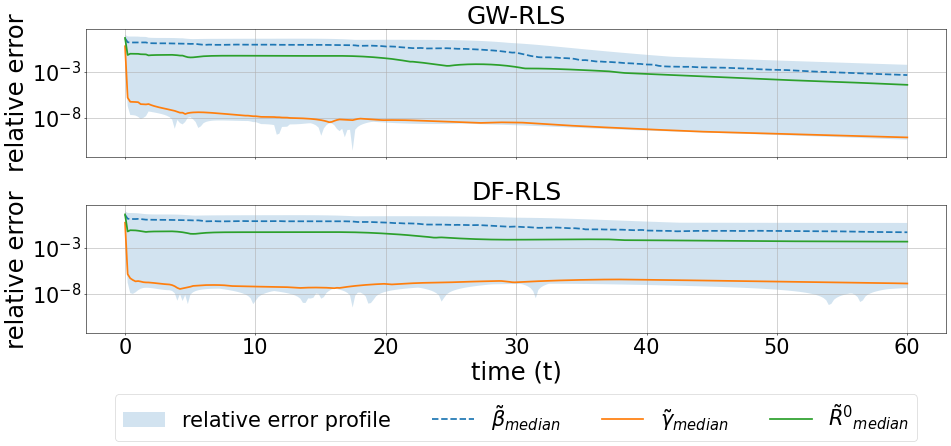}
    \caption{Performance comparison between \edit{GW-RLS} and DF-RLS on the noiseless $7$-node fully-connected SIR network.}
    \label{fig:GWRLS_DFRLS_noiseless}
\end{figure}
We show, in Fig.~\ref{fig:GWRLS_DFRLS_noiseless}, that \edit{GW-RLS} performs better than DF-RLS in terms of the \edit{final} relative error in the estimation median of $\beta, \gamma,$ and $\mathcal{R}_0$ \edit{at time $t=60$}.
\edit{Table~\ref{tab:comp_GWRLS_DFRLS} compares the final relative error of GW-RLS and DF-RLS.}

Furthermore, when compared to the single-node SIS parameter estimation in Fig.~\ref{fig:compare_scalar_SIS}, the relative estimation error increases significantly in GW-RLS, by orders of magnitude.
The increase in estimation error is partially due to the curse of dimensionality in the estimation problem.
For every node added to the network, there could be $n-1$ extra edges, i.e., 
the number of system parameters grows quadratically, $\mathcal{O}(n^2)$, with respect to the number of measured signals,~$n$.

In the next example, we allow the process noise to scale with the square root of the state vector $X(t) \coloneqq \begin{bmatrix}
    I^\top(t) & R^\top(t)
\end{bmatrix}^\top$, approximating the arrival of random infection and recovery events under a Poisson distribution\cite[pg.~197]{keeling2008modeling}.
% Particularly, we use the following stochastic differential equation of the networked SIR model for simulating a realistic disease outbreak scenario:
% \edit{Paricularly, we apply Euler–Maruyama discretization to the stochastic differential equation $\diff X(t) = \phi(X(t)) \theta \diff t + (\phi(X(t))\text{diag}(\theta))^{\circ 1/2}\diff W$ with $(\cdot)^{\circ 1/2}$ to generate the state trajectory:}
\edit{Consider the \edit{discretized} networked SIR compartmental dynamics with process noise:}
% \begin{equation}\label{label:sir_network_sde}
%     % \dot X(t) = \phi(X(t)) \theta
%     \diff X(t) = \phi(X(t)) \theta \diff t + (\phi(X(t))\text{diag}(\theta))^{\circ 1/2}\diff W,
% \end{equation}
\edit{
\begin{align}\label{label:sir_network_sde}
    X_{k+1} &= X_{k} + h\phi(X_k) + \xi_k \nonumber\\
    \xi_k &= \sqrt{\phi(X_k)\text{diag}(\theta)} B_k, \qquad B_k \sim \mathcal{N}(0, h),
\end{align}
with $\phi(X_k) \coloneqq \phi_k$ as defined in \eqref{eq:SIRnetworkphi}, and $h> 0$ denoting the size of the discrete time step,
% .
% which is equivalent to stochastic differential equation $\diff X(t) = \phi(X(t)) \theta \diff t + (\phi(X(t))\text{diag}(\theta))^{\circ 1/2}\diff W$ with 
% and $(\cdot)^{\circ 1/2}$ denotes the element-wise square root, i.e., the Hadamard's root.
}
\edit{
under the same assumption that $\psi_k = \lim_{h\to 0} {(X_{k+1} - X_k)}/{h}$
% \begin{bmatrix}
%     \dot I(t_k)^\top & \dot R(t_k)^\top
% \end{bmatrix}$ 
is known.}
% the second term $(\phi(X(t))\text{diag}(\theta))^{\circ 1/2}\diff W$ in \eqref{label:sir_network_sde} can be effectively treated as known input signal $u(t)$.
The synthetic data generated by the Euler–Maruyama method~\cite{maruyama1955continuous} is shown in Fig~\ref{fig:NetworkSIR}.
\begin{figure}%[h]
    \centering
    \includegraphics[width=\columnwidth]{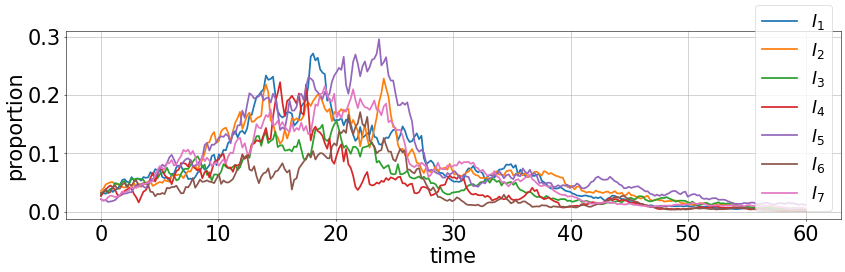}
    \caption{Simulated infection pervalence on the fully-connected network with process noise.}
    \label{fig:NetworkSIR}
\end{figure}

We observe, in Fig.~\ref{fig:NetworkSIR_est_performance}, that the estimation performance of both the \edit{GW-RLS} and DF-RLS algorithms improves significantly with process noise. This improvement is caused by the unconstrained trajectory of $\phi_k^\top\phi_k$ in the simple SIR dynamics. 
The phenomenon aligns with the intuition we learn from \edit{Proposition}~\ref{thm:general_lack_pe} and Example~\ref{ex:lack_pe_networked_SIS_SIR}.
In practice, the above experiments indicate that process noise might improve identification results with proper techniques \edit{for} recovering state derivatives~\cite{chartrand2011numerical}.

\begin{figure}%[h]
    \centering
    \includegraphics[width=\columnwidth]{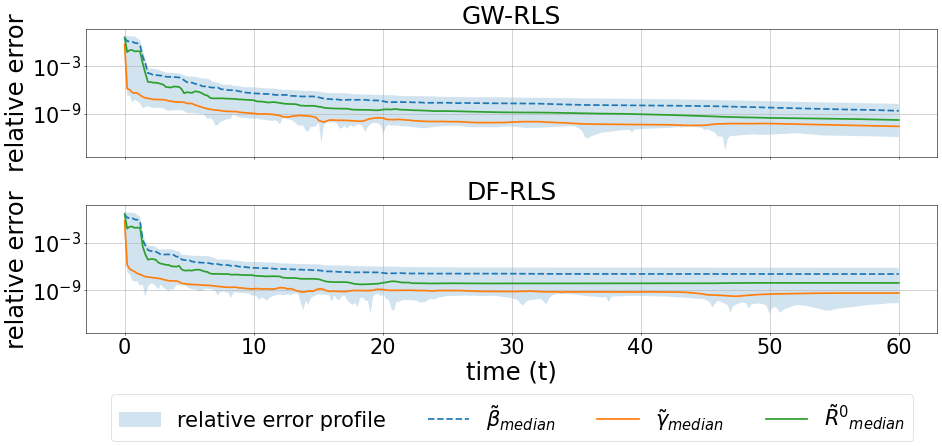}
    \caption{
    Performance comparison between \edit{GW-RLS} and DF-RLS on the $7$-node fully-connected SIR network with known process noise. The estimation with process noise outperforms that of the noiseless example.
    } \label{fig:NetworkSIR_est_performance}
\end{figure}
\begin{table}
    \centering
    \renewcommand{\tablename}{\centering \textbf{Table}}
    % \caption{Final Relative Error GW-RLS and DF-RLS in Noiseless and Noisy Conditions}
    \resizebox{.49\textwidth}{!}{%
        \begin{tabular}{lcccccc}
            \hline
            & \textbf{$\beta_{\text{median}}$} & \textbf{$\gamma_{\text{median}}$} & \textbf{$\mathcal{R}^0_{\text{median}}$} \\
            \hline
            \multicolumn{4}{c}{\textbf{Noiseless}} \\
            \hline
            GW-RLS & 0.000856 & $7.68 \times 10^{-11}$ & $3.88 \times 10^{-5}$ \\
            DF-RLS & 0.051253 & $1.39 \times 10^{-7}$ & $4.93 \times 10^{-3}$ \\
            \hline
            \multicolumn{4}{c}{\textbf{Noisy}} \\
            \hline
            GW-RLS & $1.23 \times 10^{-9}$ & $6.63 \times 10^{-12}$ & $1.71 \times 10^{-10}$ \\
            DF-RLS & $1.09 \times 10^{-7}$ & $4.40 \times 10^{-10}$ & $8.36 \times 10^{-9}$ \\
            \hline
        \end{tabular}
    }
    \caption{Final relative error of GW-RLS and DF-RLS in Noiseless and Noisy Conditions}
    \label{tab:comp_GWRLS_DFRLS}
\end{table}
To conclude, we examine the performance of 
% {\color{red} infection networks} 
\edit{GW-RLS on estimating a system with 56 parameters}
under a noiseless scenario. 
Fig.~\ref{fig:sparcity_error} illustrates the true values and estimation errors of the infection network previously introduced in Fig.~\ref{fig:3graphs}. 
We observe that the relative error is highest for the estimation of empty edges, indicating the utility of LASSO regression in contrast to the ridge regression currently used in \edit{GW-RLS}.
% , as noted in Corollary~\ref{col:closed_form_est}. 
However, as demonstrated in the comparison between Fig.~\ref{fig:GWRLS_DFRLS_noiseless} and \ref{fig:NetworkSIR_est_performance}, the presence of persistently exciting input signals is also crucial for identifying high-dimensional sparse networks. 
\begin{figure}%[h]
    \centering
    \begin{subfigure}[t]{0.48\columnwidth}
            \includegraphics[width=\columnwidth]{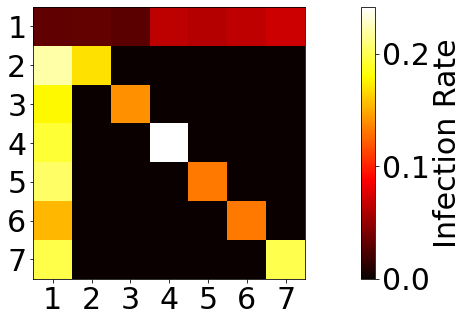}
            \caption{True network (Star)}
            \label{fig:mat:star_pattern}
            \vspace*{2mm}
    \end{subfigure}
    \begin{subfigure}[t]{0.48\columnwidth}
            \includegraphics[width=\columnwidth]{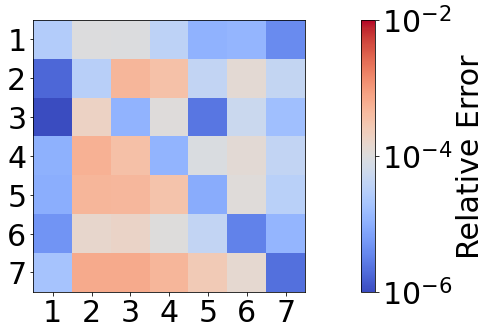}
            \caption{Estimation errors (Star)}
            \label{fig:mat:star_error}
            \vspace*{2mm}
    \end{subfigure}

    \begin{subfigure}[t]{0.48\columnwidth}
            \includegraphics[width=\columnwidth]{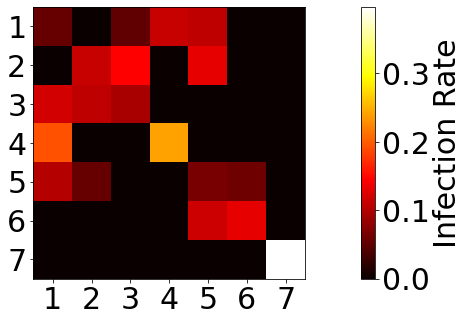}
            \caption{True network (ER)}
            \label{fig:mat:er_pattern}
            \vspace*{2mm}
    \end{subfigure}
    \begin{subfigure}[t]{0.48\columnwidth}
            \includegraphics[width=\columnwidth]{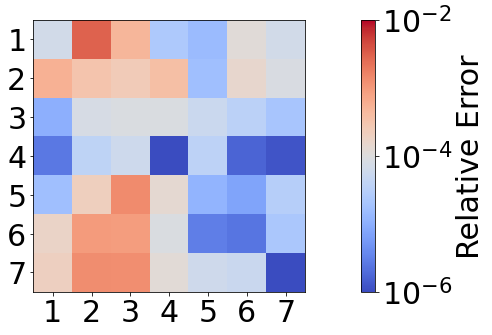}
            \caption{Estimation errors (ER)}
            \label{fig:mat:er_error}
            \vspace*{2mm}
    \end{subfigure}
    \caption{        Fig.~\ref{fig:mat:star_pattern} and Fig.~\ref{fig:mat:er_pattern} illustrate the matrix form of the star and ER network shown earlier in Fig.~\ref{fig:graph:star} and Fig.~\ref{fig:graph:er}.
        Fig.~\ref{fig:mat:star_error} and Fig.~\ref{fig:mat:er_error} show the relative estimation error of their corresponding rate matrix at time $t=60$.
    }
    \label{fig:sparcity_error}
\end{figure}

\subsection{Time-Varying System}\label{sec:sim:timevarying}
In this section, we examine the effectiveness of the change point detection algorithm proposed in Section~\ref{sec:fault-dect} by assessing parameter estimation errors. 
We start by considering 
Example~\ref{ex:sir_cp} with process noise that is similar to the experimental setup in the previous section, \edit{with step size $h=0.2$}.
We also added observation noise to the observed signal in the form of \eqref{eqn:gen_nonlinear_sys_pre} with a strength of around one-tenth the value of $\psi_k$.
Fig.~\ref{fig:2-node-time-v-inf-data} 
\edit{shows the infection prevalence of the two-node SIR epidemic model with process noise, as specified in Example~\ref{ex:sir_cp} with noise structure similar to \eqref{label:sir_network_sde}.}
% visualizes the infection synthetic data.

\begin{figure}
    \centering
    \includegraphics[width=\columnwidth]{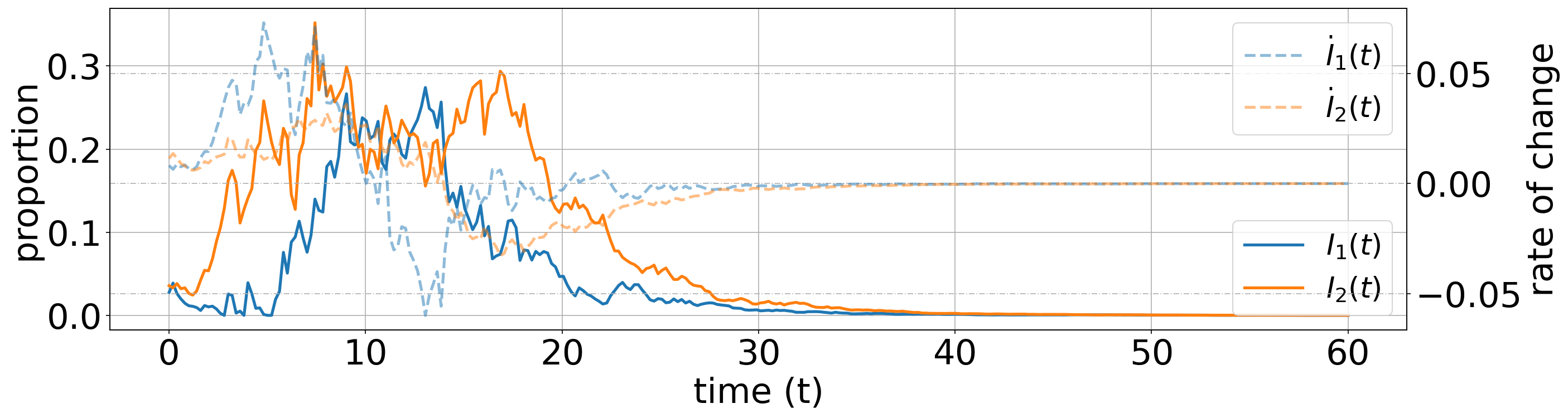}
    \caption{Infection prevalence of a 2-node SIR network}
    \label{fig:2-node-time-v-inf-data}
\end{figure}
\begin{figure}
    \centering
    \begin{subfigure}[t]{\columnwidth}
        \centering
        \includegraphics[width=\columnwidth]{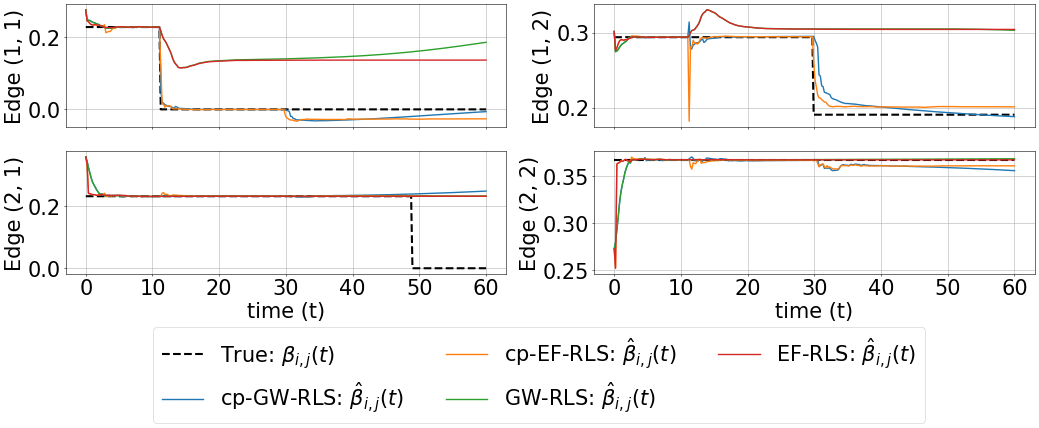}
        \caption{Infection rate estimation}
        \label{fig:time_varying_tracking_performance_beta}
    \end{subfigure}
    \begin{subfigure}[t]{\columnwidth}
        \centering
    \end{subfigure}
    \begin{subfigure}[t]{\columnwidth}
        \centering
        \includegraphics[width=\columnwidth]{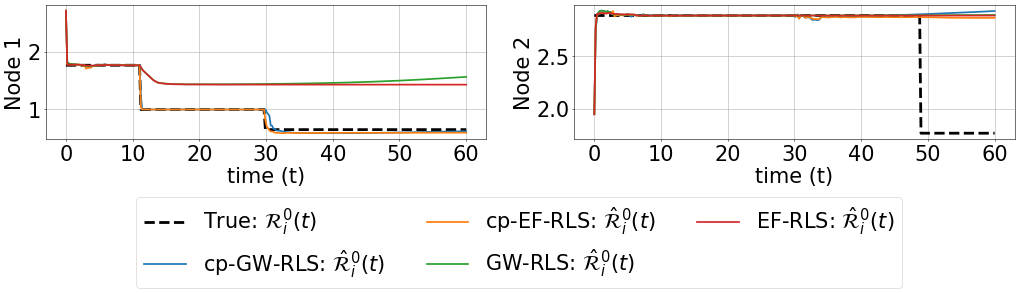}
        \caption{Local basic reproduction number estimation}
        \label{fig:time_varying_tracking_performance_r0}
    \end{subfigure}
    \caption{Parameter tracking performance comparison before and after applying change point detection resetting.
    \edit{The cp-GW-RLS and cp-EF-RLS represent the estimation trajectory under GW-RLS and EF-RLS when coupled with the proposed change point state resetting method described in Algorithm~\ref{alg:rec_lrt_cp_dtect}.}
    % through Algorithm~\ref{alg:cps_reset_scheme}.}
    }
    \label{fig:time_varying_tracking_performance}
\end{figure}

\edit{
Fig.~\ref{fig:time_varying_tracking_performance} demonstrates that the parameter estimation performance significantly improves when coupled with the change point state resetting scheme. In Fig.~\ref{fig:time_varying_tracking_performance_beta}, we observe that while EF-RLS and GW-RLS respond to the first parameter jump, their adaptive behavior is further enhanced when combined with the proposed change point state resetting algorithm. 
Coupling with Algorithm~\ref{alg:cps_reset_scheme} allows cp-EF-RLS and cp-GW-RLS to successfully adapt to both the first and second parameter changes. 
Notably, the second change point occurs at $t=30$, coinciding with the state signal decay, whereas the final change point at $t=49$ is nearly undetectable.
% We also see that the GW-RLS outperform EF-RLS when the signal dies out after $t=30$.
}
% \begin{figure}[t]
%     \centering
%     \includegraphics[width=\columnwidth]{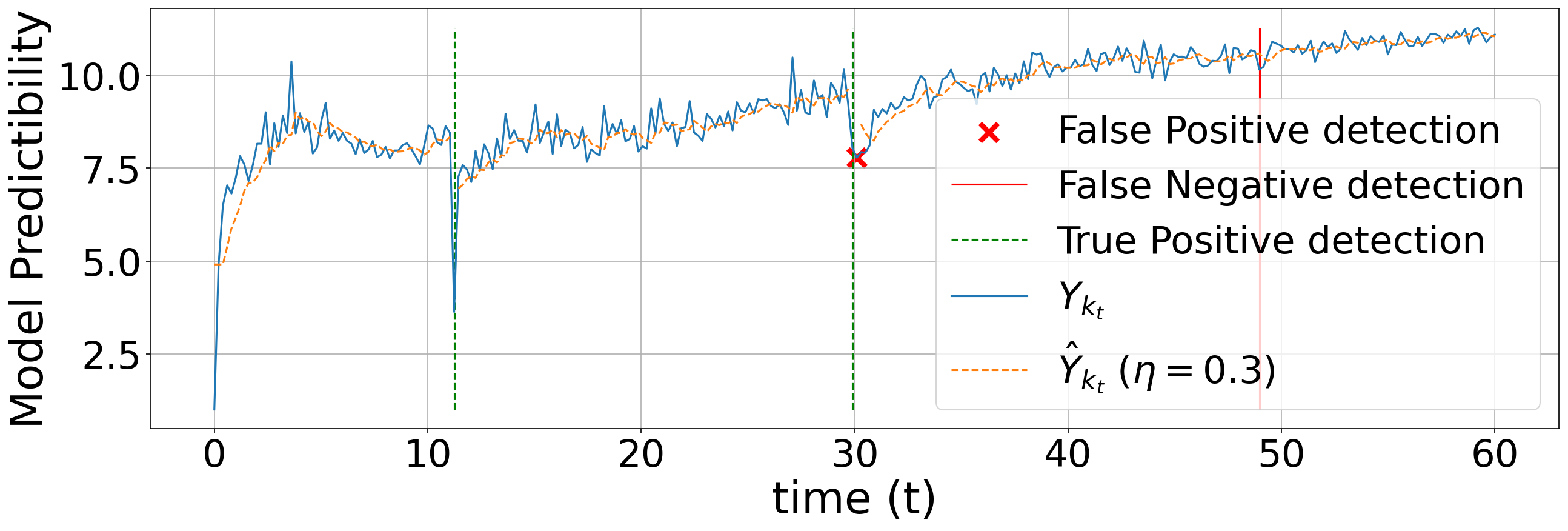}
%     \caption{Change point detection performance, significant level is chosen to be $\tau= 0.1$ with exponential weight $\eta=0.5$.
%     The y-axis represents the magnitude of the model predictability $Y_k$, with higher values being preferable.
%     }
%     \label{fig:chp:rec_lrt_ch_performance}
% \end{figure}
\begin{figure}
    \centering
    \includegraphics[width=\columnwidth]{imgs/detection_pos_neg_GWRLS_low_freq.png}
    \caption{Change point detection performance with GW-RLS, significance level is chosen to be $\tau= 0.1$ with exponential weight $\eta=0.5$.
    The y-axis represents the magnitude of the model predictability $Y_k$, with higher values being preferable.
    }
    \label{fig:chp:rec_lrt_ch_performance}
\end{figure}

Fig.~\ref{fig:chp:rec_lrt_ch_performance} shows the performance of Algorithm~\ref{alg:rec_lrt_cp_dtect} in terms of the true-positive, false-positive, and false-negative detection \edit{instances}. \edit{When the frequency of the parameter changes is low, Algorithm~\ref{alg:rec_lrt_cp_dtect} performs relatively well, and the effect of false positive and negative detections does not negatively impact future predictability, in this example.}

\edit{
Next, we are interested in the change point detection performance under different hyperparameter setups.
In particular, we illustrate the intuition behind choosing different $\eta$ and its impact on the change point detection performance.
% We note that the detection algorithm exhibits commendable efficacy, successfully minimizing false-negatives under a fixed significance level.
% However, it is worth noting that subsequent detection of true-positive change points relies on the prior successful rejections of false-negative detection occurrences. 
% This dependency arises because any false-negative detection diminishes the accuracy of upstream parameters estimation, leading to a long-term decrease in the magnitude of $Y_k$, thus potentially obscuring the impact of future change point arrivals in  $Y_k$.
% On the other hand, false-positive detections prematurely reset the estimator's memory, leading to slow convergence.
% The above observations suggest that a balance must be struck between true-positive and false-positive rates in the change point detector.
}
\begin{figure}
    \centering
    \includegraphics[width=\columnwidth]{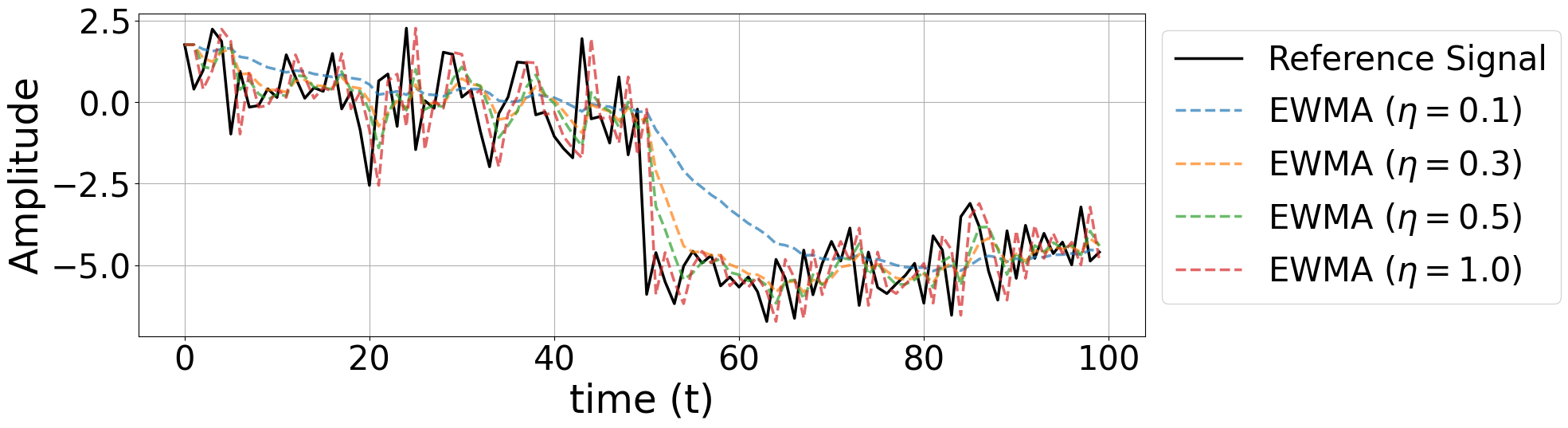}
    \caption{The EWMA acts as a low pass filter on a Gaussian white noise signal with an abrupt downward drift at time $t=50$.
    Values of $\eta \in \{0.1, 0.3, 0.5, 1.0\}$ determine the smoothing effect of the EWMA.
    }
    \label{fig:chpt:EWAM_lagged_effects}
\end{figure}
% To choose the optimal significant level $\tau$ and exponentially weighted factor $\eta$ in Algorithm~\ref{alg:rec_lrt_cp_dtect}, we examine the impact of $\eta$ on a stochastic signal.
In Fig.~\ref{fig:chpt:EWAM_lagged_effects}, we \edit{observe} that the exponential weight $\eta \in [0, 1]$ determines the smoothing intensity of an EWMA.
When $\eta = 1$, EWMA corresponds to the most recent update and provides no smoothing effect.
As $\eta$ approaches $0$, EWMA relies solely on its initial value, resulting in maximal smoothing. 

\edit{Next, we increase the number of change points and check the performance of Algorithm~\ref{alg:rec_lrt_cp_dtect} under different $\eta$ values and different online identifiers.}
\begin{figure}
    \begin{subfigure}[t]{0.48\columnwidth}
        \centering
        \includegraphics[width=\columnwidth]{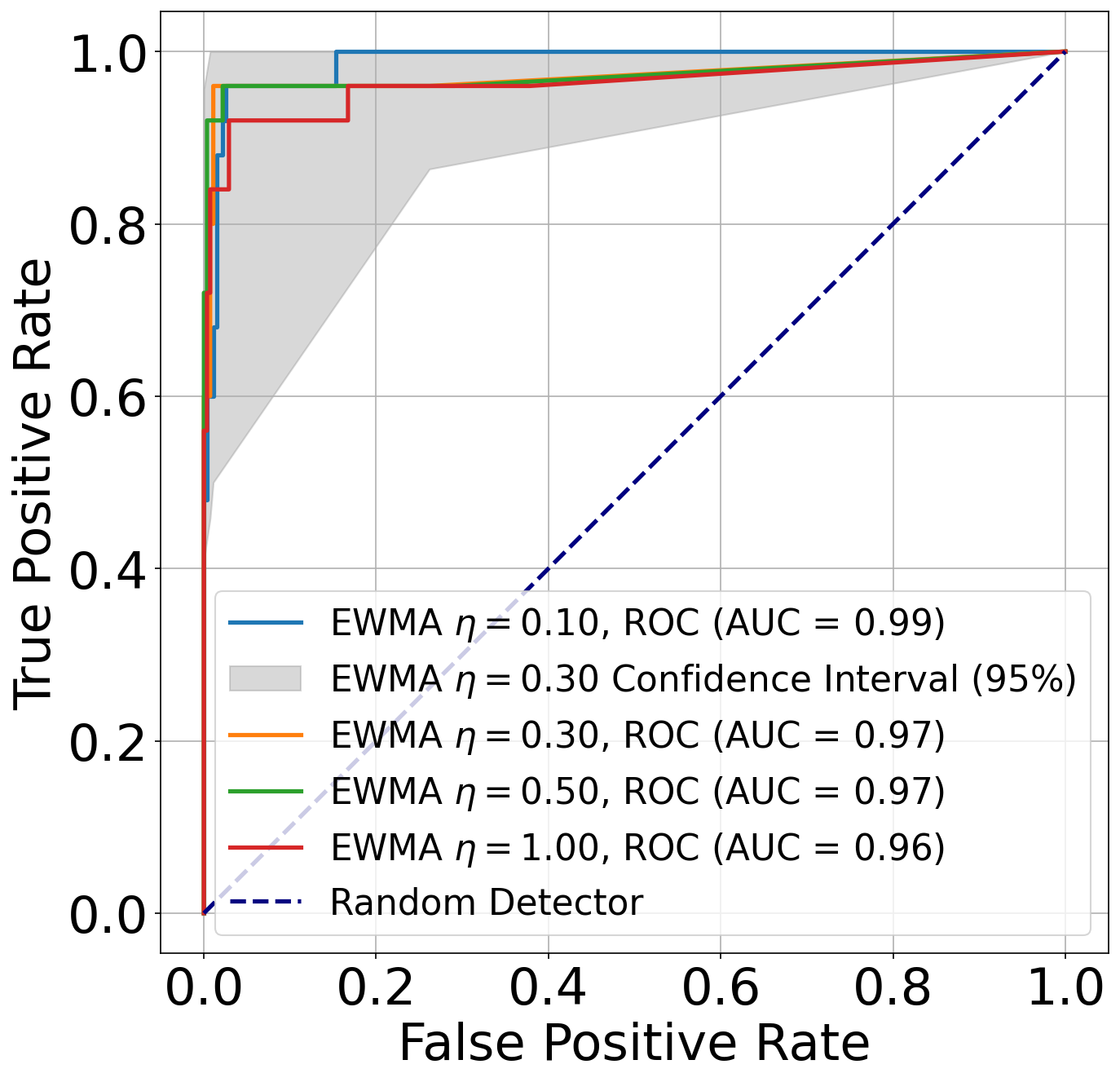}
        \caption{EF-RLS change point detection ROC curve}
        \label{fig:rls_roc}
        \vspace*{2mm}
    \end{subfigure}
    \begin{subfigure}[t]{0.48\columnwidth}
        \centering
        \includegraphics[width=\columnwidth]{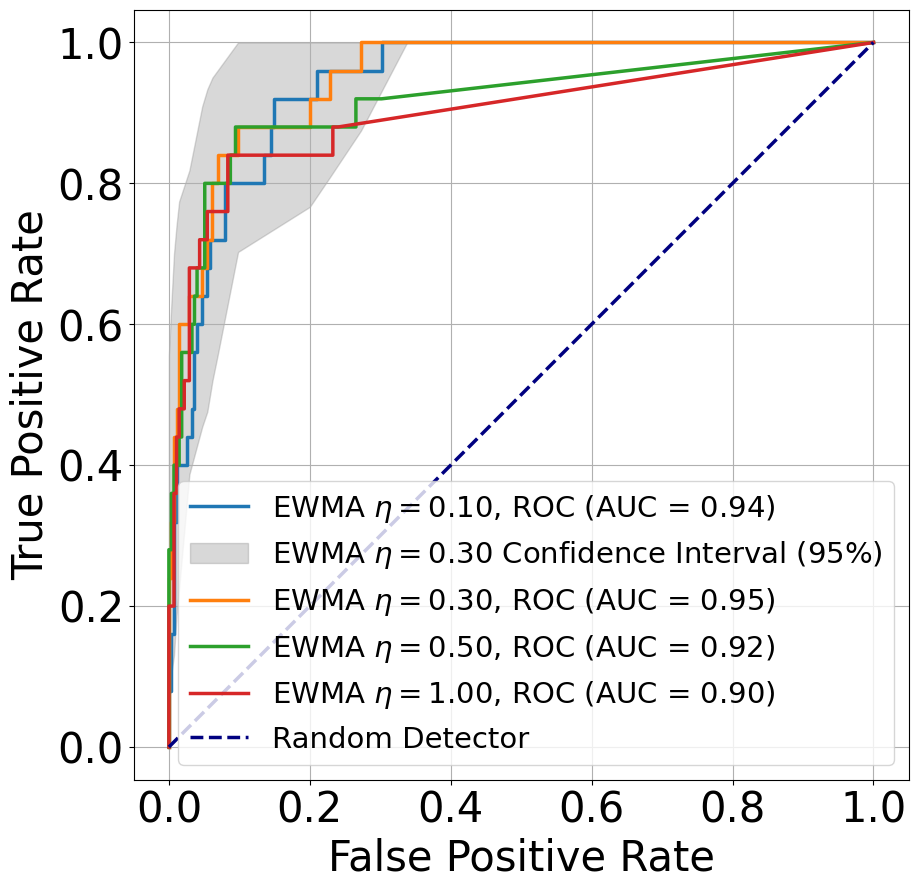}
        \caption{\edit{GW-RLS} change point detection ROC curve}
        \label{fig:gwrls_roc}
        \vspace*{2mm}
    \end{subfigure}
    \caption{ROC curves that capture the performance of change-point detection for the EF-RLS online estimator~(\subref{fig:rls_roc}) and \edit{GW-RLS} online estimator~(\subref{fig:gwrls_roc}).
    }
    \label{fig:ROC}
\end{figure}
% At times, we realize that the performance of the online estimation algorithm depends on the choice of $\tau$ and $\eta$.
% One way to further improve the estimation performance is to look at the Receiver operating characteristic~(ROC) curve to pick a threshold and an operating point that maximizes the true-positive rate.
Fig.~\ref{fig:ROC} shows the ROC curve of the performance of the change point detection algorithm on EF-RLS and \edit{GW-RLS}.
\edit{When the frequency of changes is high, we observe that EF-RLS performs better when coupled with Algorithm~\ref{alg:rec_lrt_cp_dtect}.}
\edit{Furthermore}, the ROC curve indicates that we can set $\eta=0.3$ to maximize the area under the ROC curve for \edit{GW-RLS}.
% Then, we may pick an operating point that corresponds to a high true positive rate.

\subsection{Time-Varying System}\label{sec:sim:timevarying}
In this section, we examine the effectiveness of the change point detection algorithm proposed in Section~\ref{sec:fault-dect} by assessing parameter estimation errors. 
We start by considering 
Example~\ref{ex:sir_cp} with process noise that is similar to the experimental setup in the previous section, \edit{with step size $h=0.2$}.
We also added observation noise to the observed signal in the form of \eqref{eqn:gen_nonlinear_sys_pre} with a strength of around one-tenth the value of $\psi_k$.
Fig.~\ref{fig:2-node-time-v-inf-data} 
\edit{shows the infection prevalence of the two-node SIR epidemic model with process noise, as specified in Example~\ref{ex:sir_cp} with noise structure similar to \eqref{label:sir_network_sde}.}
% visualizes the infection synthetic data.

\begin{figure}
    \centering
    \includegraphics[width=\columnwidth]{imgs/cp_system_data.png}
    \caption{Infection prevalence of a 2-node SIR network}
    \label{fig:2-node-time-v-inf-data}
\end{figure}
\begin{figure}
    \centering
    \begin{subfigure}[t]{\columnwidth}
        \centering
        \includegraphics[width=\columnwidth]{imgs/cp_rls_beta_improve.png}
        \caption{Infection rate estimation}
        \label{fig:time_varying_tracking_performance_beta}
    \end{subfigure}
    \begin{subfigure}[t]{\columnwidth}
        \centering
    \end{subfigure}
    \begin{subfigure}[t]{\columnwidth}
        \centering
        \includegraphics[width=\columnwidth]{imgs/cp_rls_r0_improve.png}
        \caption{Local basic reproduction number estimation}
        \label{fig:time_varying_tracking_performance_r0}
    \end{subfigure}
    \caption{Parameter tracking performance comparison before and after applying change point detection resetting.
    \edit{The cp-GW-RLS and cp-EF-RLS represent the estimation trajectory under GW-RLS and EF-RLS when coupled with the proposed change point state resetting method described in Algorithm~\ref{alg:rec_lrt_cp_dtect}.}
    % through Algorithm~\ref{alg:cps_reset_scheme}.}
    }
    \label{fig:time_varying_tracking_performance}
\end{figure}

\edit{
Fig.~\ref{fig:time_varying_tracking_performance} demonstrates that the parameter estimation performance significantly improves when coupled with the change point state resetting scheme. In Fig.~\ref{fig:time_varying_tracking_performance_beta}, we observe that while EF-RLS and GW-RLS respond to the first parameter jump, their adaptive behavior is further enhanced when combined with the proposed change point state resetting algorithm. 
Coupling with Algorithm~\ref{alg:cps_reset_scheme} allows cp-EF-RLS and cp-GW-RLS to successfully adapt to both the first and second parameter changes. 
Notably, the second change point occurs at $t=30$, coinciding with the state signal decay, whereas the final change point at $t=49$ is nearly undetectable.
% We also see that the GW-RLS outperform EF-RLS when the signal dies out after $t=30$.
}
% \begin{figure}[t]
%     \centering
%     \includegraphics[width=\columnwidth]{imgs/detection_pos_neg_GWRLS_low_freq.png}
%     \caption{Change point detection performance, significant level is chosen to be $\tau= 0.1$ with exponential weight $\eta=0.5$.
%     The y-axis represents the magnitude of the model predictability $Y_k$, with higher values being preferable.
%     }
%     \label{fig:chp:rec_lrt_ch_performance}
% \end{figure}
\begin{figure}
    \centering
    \includegraphics[width=\columnwidth]{imgs/detection_pos_neg_GWRLS_low_freq.png}
    \caption{Change point detection performance with GW-RLS, significance level is chosen to be $\tau= 0.1$ with exponential weight $\eta=0.5$.
    The y-axis represents the magnitude of the model predictability $Y_k$, with higher values being preferable.
    }
    \label{fig:chp:rec_lrt_ch_performance}
\end{figure}

Fig.~\ref{fig:chp:rec_lrt_ch_performance} shows the performance of Algorithm~\ref{alg:rec_lrt_cp_dtect} in terms of the true-positive, false-positive, and false-negative detection \edit{instances}. \edit{When the frequency of the parameter changes is low, Algorithm~\ref{alg:rec_lrt_cp_dtect} performs relatively well, and the effect of false positive and negative detections does not negatively impact future predictability, in this example.}

\edit{
Next, we are interested in the change point detection performance under different hyperparameter setups.
In particular, we illustrate the intuition behind choosing different $\eta$ and its impact on the change point detection performance.
% We note that the detection algorithm exhibits commendable efficacy, successfully minimizing false-negatives under a fixed significance level.
% However, it is worth noting that subsequent detection of true-positive change points relies on the prior successful rejections of false-negative detection occurrences. 
% This dependency arises because any false-negative detection diminishes the accuracy of upstream parameters estimation, leading to a long-term decrease in the magnitude of $Y_k$, thus potentially obscuring the impact of future change point arrivals in  $Y_k$.
% On the other hand, false-positive detections prematurely reset the estimator's memory, leading to slow convergence.
% The above observations suggest that a balance must be struck between true-positive and false-positive rates in the change point detector.
}
\begin{figure}
    \centering
    \includegraphics[width=\columnwidth]{imgs/ewma_lagged_effects.png}
    \caption{The EWMA acts as a low pass filter on a Gaussian white noise signal with an abrupt downward drift at time $t=50$.
    Values of $\eta \in \{0.1, 0.3, 0.5, 1.0\}$ determine the smoothing effect of the EWMA.
    }
    \label{fig:chpt:EWAM_lagged_effects}
\end{figure}
% To choose the optimal significant level $\tau$ and exponentially weighted factor $\eta$ in Algorithm~\ref{alg:rec_lrt_cp_dtect}, we examine the impact of $\eta$ on a stochastic signal.
In Fig.~\ref{fig:chpt:EWAM_lagged_effects}, we \edit{observe} that the exponential weight $\eta \in [0, 1]$ determines the smoothing intensity of an EWMA.
When $\eta = 1$, EWMA corresponds to the most recent update and provides no smoothing effect.
As $\eta$ approaches $0$, EWMA relies solely on its initial value, resulting in maximal smoothing. 

\edit{Next, we increase the number of change points and check the performance of Algorithm~\ref{alg:rec_lrt_cp_dtect} under different $\eta$ values and different online identifiers.}
\begin{figure}
    \begin{subfigure}[t]{0.48\columnwidth}
        \centering
        \includegraphics[width=\columnwidth]{imgs/chpt_roc.png}
        \caption{EF-RLS change point detection ROC curve}
        \label{fig:rls_roc}
        \vspace*{2mm}
    \end{subfigure}
    \begin{subfigure}[t]{0.48\columnwidth}
        \centering
        \includegraphics[width=\columnwidth]{imgs/GRLS_roc.png}
        \caption{\edit{GW-RLS} change point detection ROC curve}
        \label{fig:gwrls_roc}
        \vspace*{2mm}
    \end{subfigure}
    \caption{ROC curves that capture the performance of change-point detection for the EF-RLS online estimator~(\subref{fig:rls_roc}) and \edit{GW-RLS} online estimator~(\subref{fig:gwrls_roc}).
    }
    \label{fig:ROC}
\end{figure}
% At times, we realize that the performance of the online estimation algorithm depends on the choice of $\tau$ and $\eta$.
% One way to further improve the estimation performance is to look at the Receiver operating characteristic~(ROC) curve to pick a threshold and an operating point that maximizes the true-positive rate.
Fig.~\ref{fig:ROC} shows the ROC curve of the performance of the change point detection algorithm on EF-RLS and \edit{GW-RLS}.
\edit{When the frequency of changes is high, we observe that EF-RLS performs better when coupled with Algorithm~\ref{alg:rec_lrt_cp_dtect}.}
\edit{Furthermore}, the ROC curve indicates that we can set $\eta=0.3$ to maximize the area under the ROC curve for \edit{GW-RLS}.

\section{Conclusion}
% We have highlighted two problems that plague the application of online parameter estimation tools to nonlinear systems: the lack of persistence of excitation and the practical non-identifiability of epidemic models. 
% We have proposed a novel algorithm (\edit{GW-RLS}) based on recursive least squares and 
% \edit{constructing an excitation} set for the regressor. 
% The \edit{GW-RLS} Algorithm has superior performance compared to conventional algorithms and is able to identify epidemic parameters in the networked SIS/SIR model with process and observation noise. 
% Furthermore, we introduce a change-point detection-based memory resetting scheme to aid time-varying parameter tracking. 
% The performance of this parameter tracking scheme is demonstrated in numerical simulations.
\edit{In this work, we have identified two significant challenges that hinder the effectiveness of online parameter estimation methods in nonlinear systems: the lack of persistent excitation and  practical non-identifiability, and illustrate them with  epidemic models as examples. 
To address these issues, we introduced the Greedily-Weighted Recursive Least Squares (GW-RLS) algorithm, which leverages recursive least squares combined with the concept of an excitation set to ensure that the data points used for regression provide sufficient informational richness.}

\edit{
The proposed GW-RLS algorithm demonstrates outstanding performance over conventional methods, successfully identifying epidemic parameters within networked SIS and SIR models, even in the presence of process and observation noise. 
Additionally, we introduced a change-point detection-based memory resetting scheme, enhancing the algorithm's capability to accurately track parameters that vary over time via abrupt changes. 
We also demonstrate that the change-point detection-based memory resetting scheme works well with other online estimation methods, including EF-RLS. 
The effectiveness of this parameter-tracking enhancement is validated through extensive numerical simulations.}

\edit{
% Although the proposed method is effective, several limitations highlight valuable opportunities for future research.
% First, selecting data points based on minimizing the information matrix's condition number can introduce sensitivity to noise. Regularization methods or constraints on parameter updates could be explored to address this challenge.
% Second, the current algorithm prioritizes abrupt parameter changes, potentially overlooking slowly varying dynamics. 
% Future studies might investigate adaptive or hybrid strategies to handle a broader range of parameter variations effectively.
% Lastly, despite the realistic noise structures introduced in the simulations, the effectiveness of the algorithm is yet to be validated with real-world data. 
% Future research will aim to refine the current methodology {via incorporating both the least singular
% value and the condition number in the greedy selection policy}, ensure robustness in practical applications, and comprehensively evaluate the algorithm's performance on real datasets.
Despite its effectiveness, the method has limitations: selecting points by minimizing the information-matrix condition number can be noise-sensitive (regularization or constrained updates may help); it favors abrupt parameter changes and may miss slow dynamics (adaptive or hybrid strategies could address this); and it still requires real-world validation. 
Future work will add the least singular value alongside the condition number in the greedy selection, improve robustness, and evaluate performance on real datasets.
}

\bibliographystyle{IEEEtran}
\bibliography{./bib/refs}

\end{document}